\documentclass[11pt, a4paper]{article}

\usepackage[utf8]{inputenc}
\usepackage[T1]{fontenc}
\usepackage{lmodern}
\usepackage{amsmath,amssymb,amsthm}
\usepackage{graphicx}
\usepackage{tikz}
\usetikzlibrary{positioning,arrows.meta,calc,fit}
\usepackage{hyperref}
\usepackage{geometry}
\usepackage{booktabs}
\usepackage{tabularx}
\usepackage{algorithm}
\usepackage[noend]{algpseudocode}
\usepackage{url}
\usepackage{xspace}
\usepackage{enumitem}

\definecolor{algkwcolor}{RGB}{0,90,140}  
\newcommand{\algkw}[1]{{\color{algkwcolor}\textbf{#1}}}
\definecolor{algcmtcolor}{RGB}{100,120,100}  
\newcommand{\algcmt}[1]{{\color{algcmtcolor}#1}}
\algrenewcommand{\algorithmiccomment}[1]{\hfill\algcmt{$\triangleright$ #1}}
\definecolor{algvarcolor}{RGB}{150,50,30}  
\newcommand{\V}[1]{{\color{algvarcolor}\mathit{#1}}}
\algrenewcommand{\alglinenumber}[1]{\color{algcmtcolor}\footnotesize #1}

\algrenewcommand\algorithmicdo{\algkw{do}}
\algrenewcommand\algorithmicif{\algkw{if}}
\algrenewcommand\algorithmicthen{\algkw{then}}
\algrenewcommand\algorithmicelse{\algkw{else}}
\algrenewcommand\algorithmicfor{\algkw{for}}
\algrenewcommand\algorithmicfunction{\algkw{function}}
\algrenewcommand\algorithmicprocedure{\algkw{procedure}}
\algrenewcommand\algorithmicreturn{\algkw{return}}
\algrenewcommand\Return{\State \algorithmicreturn{} }

\makeatletter
\def\@EventEmpty{}
\def\@EventStart#1#2#3{\algkw{upon}\ #1$\langle$#2$\rangle$\def\@EventTemp{#3}\ifx\@EventTemp\@EventEmpty\else\ #3\fi\ \algorithmicdo}
\algdef{SE}[EVENT]{Event}{EndEvent}[3]{\@EventStart{#1}{#2}{#3}}{\algorithmicend\ \algkw{event}}
\makeatother

\algdef{SE}[INIT]{Init}{EndInit}[1]{\algkw{upon initialization}\ #1\ \algorithmicdo}{\algorithmicend\ \algkw{init}}

\algdef{SE}[RECV]{Recv}{EndRecv}[2]{\algkw{upon}\ #1\ \algkw{from}\ #2\ \algorithmicdo}{\algorithmicend\ \algkw{recv}}

\algdef{SE}[CONS]{Cons}{EndCons}[1]{\algkw{upon}\ #1\ \algkw{from} consensus \algorithmicdo}{\algorithmicend\ \algkw{cons}}

\algdef{SE}[COND]{Cond}{EndCond}[1]{\algkw{when}\ #1\ \algorithmicdo}{\algorithmicend\ \algkw{cond}}

\algtext*{EndEvent}
\algtext*{EndInit}
\algtext*{EndRecv}
\algtext*{EndCons}
\algtext*{EndCond}

\newcommand{\gossiptext}[2]{\algkw{gossip} #1 \algkw{to} #2}
\algnewcommand\Gossip[2]{\State \gossiptext{#1}{#2}}

\newcommand{\broadcasttext}[1]{\algkw{broadcast} #1}
\algnewcommand\Broadcast[1]{\State \broadcasttext{#1}}

\newcommand{\proposetext}[2]{\algkw{propose}(#1, #2)}
\algnewcommand\Propose[2]{\State \proposetext{#1}{#2}}

\algnewcommand\Trigger[2]{\State \algkw{trigger} #1$\langle$#2$\rangle$}
\algnewcommand\SetTimeout[2]{\State \algkw{setTimeout}(#1, #2)}
\algnewcommand\InitGossip[0]{\State $gossipChannel \gets \text{getGossipChannel}()$}
\algnewcommand\InitConsensus[0]{\State $\text{initConsensus}()$}
\algnewcommand\Space[0]{\Statex\vspace{-0.5em}}

\newcommand{\protoname}{\textsf{AMP}\xspace}

\newcommand{\payload}{payload\xspace}
\newcommand{\payloads}{payloads\xspace}
\newcommand{\Payload}{Payload\xspace}

\newcommand{\collector}{proposer\xspace}

\newcommand{\collectors}{proposers\xspace}
\newcommand{\Collectors}{Proposers\xspace}

\newtheorem{definition}{Definition}
\newtheorem{lemma}{Lemma}
\newtheorem{corollary}{Corollary}
\newtheorem{theorem}{Theorem}

\hypersetup{
  colorlinks=true,
  linkcolor=blue,
  citecolor=blue,
  urlcolor=blue,
  pdftitle={\protoname},
  pdfauthor={Anonymous Author(s)}
}

\newcommand{\gordon}[1]{\textcolor{blue}{\textbf{Gordon:} #1}}
\newcommand{\adi}[1]{\textcolor{brown}{\textbf{Adi:} #1}}
\newcommand{\daniel}[1]{\textcolor{purple}{\textbf{Daniel:} #1}}
\newcommand{\joao}[1]{\textcolor{teal}{\textbf{Joao:} #1}}
\newcommand{\preston}[1]{\textcolor{violet}{\textbf{Preston:} #1}}

\renewcommand{\gordon}[1]{}
\renewcommand{\adi}[1]{}
\renewcommand{\daniel}[1]{}
\renewcommand{\joao}[1]{}
\renewcommand{\preston}[1]{}

\title{\textbf{\protoname: Arc Multi-Proposer Protocol with Bounded Inclusion Guarantees}}

\author{
Daniel Cason \and
Gordon Liao \and
Sergio Mena \and
Nenad Milo\v{s}evi\'c \and
Adi Seredinschi \and
Alessandro Sforzin \and
Jo\~ao Sousa \and
Preston Vander Vos \and
\\
Circle
}

\date{}

\begin{document}

\maketitle

\begin{abstract}
    \noindent
    Blockchain systems that settle financial transactions face a structural
tension: the single validator that assembles each block holds unilateral power
over transaction inclusion and ordering.
Traditional markets curb this very power through front-running and
market-manipulation laws.
Regulators have flagged the absence of such rules as a first-order concern for
blockchain-based financial infrastructure.
In response, we introduce \protoname, a multi-proposer protocol, on top of the
Tendermint consensus algorithm, where no validator can control the flow of
transactions into blocks.  Instead, dedicated nodes called \emph{\collectors}
sit between users and validators.
They collect user transactions, group them into \payloads, and broadcast the
\payloads to all validators.
Consequently, there is no mempool, and \protoname applies the design principle of
separating dissemination from agreement, which can lead to higher throughput.
Validators publicly attest to receiving \payloads
and run consensus to decide the set of \payloads to include in the next block.
When all correct validators attest to a given \payload, \protoname guarantees
that \payload will be included in the next block;
a block thus contains \payloads from multiple \collectors, allowing for bulk
finalization.
This \emph{bounded inclusion} guarantee along with a deterministic ordering
algorithm which is run over all \payloads included in a block, curbs the power
of any single validator.
Validators no longer control what is included in a block, nor can they
arbitrarily order the contents of blocks.

\end{abstract}


\section{Introduction}


Most blockchain systems delegate block construction to a single validator per
height.
This validator---the \emph{block assembler}---decides which transactions enter
the block and in what order.
The design is a pragmatic inheritance from classic BFT
consensus~\cite{castro1999pbft, yin2019hotstuff}, but it concentrates
two powers in one actor: the power to \emph{exclude} transactions and the
power to \emph{order} them.

The exclusion power enables censorship as the assembler can delay or omit
transactions at will.  The ordering power enables \emph{maximal extractable
value} (MEV)~\cite{Daian2020FlashBoys,fox2023censorship, garimidi2025mcp}.
The block assembler can reorder transactions to extract profit at users'
expense through strategies such as front-running and sandwich attacks.
In traditional financial markets, analogous behavior is prohibited by
regulation.
The Bank for International Settlements has identified validator-driven
transaction reordering as ``activities that would be illegal in traditional
markets''~\cite{bis2022mev}.
For blockchains that aspire to settle regulated financial flows such as
foreign-exchange trades or capital-markets settlement, this gap between
protocol design and market-integrity standards is a very important concern.

The single-assembler model also creates a performance bottleneck.  Throughput
is capped by the assembler's bandwidth, leaving the aggregate capacity of the
remaining validators idle.
Further, block dissemination and ordering are tightly coupled.
Large blocks take longer to propagate to all validators and thus increase
consensus latency.
This highlights the inherent tradeoff between throughput and finalization time.
Lastly, transactions are typically disseminated twice.
First, as part of mempool gossip across all validators, and second, the
assembler sends them again in the proposed block.  This redundant communication
is wasteful.


The above problems are well recognized.
For example, to mitigate transaction exclusion risks, inclusion-list proposals
such as FOCIL~\cite{eip7805focil} in Ethereum constrain the block assembler’s
discretion by enforcing transaction inclusion via validator committees.
MCP~\cite{garimidi2025mcp} creates a multi-proposer layer with censorship
resistance that works with many consensus algorithms.
To address performance concerns, DAG-based protocols~\cite{danezis2022narwhal,
spiegelman2022bullshark,babel2023mysticeti} allow all validators to propose
blocks at the same time and can lead to higher throughput, while multi-leader
protocols~\cite{stathakopoulou2019mirbft,stathakopoulou2022iss} parallelize the
workload by partitioning transactions across concurrent consensus instances.



This list of solutions is not exhaustive, and highlights that multi-proposer designs can be compelling.
As part of our research and development on the Arc L1
blockchain~\cite{liao_mayer_soghoian_jain_tierney_2025_arc}, however, we
encounter a unique set of conditions, and we explore a different approach.




We introduce the Arc Multi-Proposer Protocol, or \protoname.
This is a protocol layer that
composes specifically on top of a Tendermint consensus algorithm to enable multiple
concurrent proposers.
\protoname inherits the safety and liveness guarantees of Tendermint, which has
been formally verified and battle-tested over years of
deployment~\cite{amoussou_2018, amoussou_2019, buchman_kwon_2016,
cason2021design_tendermint}.
The design separates two roles that the single-assembler model conflates:

\begin{itemize}
    \item \textbf{\Collectors} collect transactions from end-users, package them
    into bundles called \emph{\payloads} and disseminate each \payload to all
    validators.
    Any node can act as a \collector---this is the ``multi'' in
    multi-proposer.

    \item \textbf{Validators} participate in Tendermint consensus.
    They observe \payloads that \collectors broadcast, attest to them, and agree on assembling multiple of them into a block.
\end{itemize}





The separation of concerns is architecturally important:
\collectors are responsible for transaction dissemination (bandwidth-bound),
while validators are responsible for attestation and agreement (latency-bound).
Attestation is done by exchanging \payload identifiers via \emph{vote extensions}.
A vote extension is a feature of Tendermint consensus implementations which
allows validators to attach application-defined data to their consensus votes.

The concern of \collectors is the collection and propagation of user
transactions as they represent the limiting factor on the network's throughput.
The concern of validators is maintaining safety and liveness to keep the
network secure.
In this way, \protoname decouples dissemination from agreement and enables
higher throughput.
It does so on two grounds.
First, it eliminates the egress bandwidth bottleneck at the block assembler, as
it avoids disseminating bulky blocks of transactions,
and instead only deals with concise \payload identifiers.
Second, the network propagates each user transaction only once, as part of a
\payload.

We say that a \payload is \emph{certified} if
all correct validators attest to that \payload via their vote extensions.
The key invariant is this: If a \payload is certified, it \emph{must} be
included in the next block.
Any block proposal that omits a certified \payload will be rejected by honest
validators.
This bounds the assembler's discretion, resulting in a \emph{bounded inclusion}
guarantee: any \payload attested by all correct validators by height $h$ is
finalized at height $h{+}1$.



Bounded inclusion addresses the assembler's exclusion power.  \protoname also
constrains the assembler's ordering power through a deterministic ordering
function that fixes the execution sequence of all finalized transactions within
a height.
We discuss the MEV implications and connection to competitive priority
ordering~\cite{robinson2024priority} in Section~\ref{sec:mev}.

Beyond the technical contribution, the guarantees that \protoname provides address concerns that international standard-setting bodies have articulated for blockchain infrastructure in financial markets: deterministic settlement
finality, protection against validator-driven transaction manipulation, and
operational resilience against single points of
failure~\cite{bis2022mev, cpmiiosco2022pfmi, fsb2023crypto}.
\protoname addresses each of these directly (i) it inherits Tendermint's
deterministic finality, (ii) constrains the block assembler's inclusion and
ordering power, and (iii) distributes across the set of \collectors the power
over what \payloads, and therefore which transactions to include in a block.

As mentioned, \protoname is grounded in our work on Arc.
This provided the motivation for a Tendermint-based multi-proposer design, because
Arc is production-ready and operating on top of Malachite~\cite{circlefin_malachite_2026},
a state-of-the-art Tendermint implementation.
Replacing this would carry significant risk, so it is important to build
on top of this existing robust foundation.
The architectural separation between \collectors and validators is also an
interesting design decision in the context of Arc.
This decouples security and reliability concerns (which validators cover), from
participation in block construction (which \collectors cover).
It could be useful that many, various parties can participate in block
construction as \collectors, for instance institutions engaging in foreign exchange,
stablecoin payments, or teams that operate Automated Market Making protocols.
But these parties need not be validators themselves, which have specific
requirements in Arc~\cite{liao_mayer_soghoian_jain_tierney_2025_arc}.
There can be an overlap between the two types of roles, e.g., a validator
could also be a \collector, but this decoupling allows the two to scale and specialize independently.


\paragraph{Contributions.}
We summarize our contributions as follows:

\begin{enumerate}
  \item We introduce \protoname, a protocol that composes on top
  of the Tendermint consensus algorithm.
  It inherits the safety and liveness guarantees of the underlying algorithm,
  while enabling multiple concurrent proposers.

  \item \protoname decouples dissemination (\emph{\collectors}) from agreement (validators),
  which is a known strategy to improve throughput.
  This design also removes the need for a mempool.

  \item We also prove that \protoname provides \emph{bounded inclusion}, which,
  combined with deterministic ordering, fixes the execution sequence of
  finalized transactions, bounding the block assembler's discretion over ordering and
  inclusion of transactions in a block.
\end{enumerate}

Additionally, we provide correctness proofs for \protoname, analyze the
protocol's complexity, and show that it matches existing lower bound latency
for guaranteed transaction inclusion~\cite{abraham2025latencycost}.

\paragraph{Organization.} The rest of this article is organized as follows.
Section~\ref{sec:model} presents the system model,
Section~\ref{sec:blocks} introduces the building blocks
of \protoname, 
and Section~\ref{sec:protocol} describes the protocol.
Section~\ref{sec:discussion} discusses design tradeoffs
and MEV implications of \protoname.
Section~\ref{sec:related} surveys related work
and Section~\ref{sec:conclusion} concludes the paper.
Appendixes~\ref{sec:correctness} and~\ref{sec:analysis}
argues the protocol's correctness and analyzes its complexity.

\section{System Model}
\label{sec:model}

\protoname assumes a distributed message-passing system composed of a dynamic
set of processes, or \emph{nodes}.
The set of nodes may evolve over time as (non-validator) nodes join and leave
the network. 
Communication takes place over point-to-point channels.

\protoname considers the Byzantine failure model, where faulty nodes may behave arbitrarily and in potentially malicious ways.
This includes crash faults, where a node stops executing and may later recover.
%
A subset of nodes plays the role of validators and executes the Tendermint consensus
algorithm~\cite{buchman2018tendermint}.
For the sake of simplicity, the set of validators is assumed to be fixed;
the handing of dynamic validator sets is outlined in Section~\ref{sec:openquestions}.
As detailed in Section~\ref{sec:tendermint}, Tendermint assumes that less than
one third of the validators are faulty.

\emph{\Collectors} in \protoname are nodes that intermediate the
interaction between users that submit transactions to the network and validators that
assemble blocks.
Unlike validators, the set of \collectors may vary over time.
In theory, any node can play the role of \collector,
but when describing \protoname we assume that all \collectors are honest.
The implications of faulty \collectors are discussed in
Section~\ref{sec:byz_proposers} (at protocol level)
and in Section~\ref{sec:openquestions} (at application level).

Nodes are not assumed to have global knowledge of the full network membership.
The network is partially connected, meaning that not every pair of nodes can
communicate directly with each other.
Each node maintains connections to a subset of nodes, referred to as its peers.
Messages destined to non-peer nodes are relayed through intermediate peers.
The network is configured so that it is guaranteed that every pair of
correct nodes, which are not peers, can rely on correct intermediate nodes to
relay exchanged messages.
In other words, it is assumed that Byzantine nodes cannot prevent the
communication between correct nodes.

Messages exchanged in the consensus algorithm are authenticated using digital
signatures under a public-key infrastructure (PKI).
Message signatures produced by validators can be verified by any node and
safely relayed by intermediate peers without losing validity.
This transferable authentication model~\cite{cachin2001limited}
ensures message integrity and non-repudiation.

\protoname operates in a partially synchronous network~\cite{dwork1988partial}.
The network may behave asynchronously for a period of time, but
eventually becomes synchronous.
More precisely, there exists a Global Stabilization Time (GST) after which message
transmission delays are bounded by some constant $\Delta$.
Neither GST nor $\Delta$ are known a priori by the nodes.
\protoname inherits the liveness guarantees of Tendermint, proved for the
partially synchronous model~\cite{buchman2018tendermint}.


\section{Building Blocks}
\label{sec:blocks}

We design \protoname on top of a \emph{dissemination layer}
(Section~\ref{sec:dissemination})
and an \emph{agreement layer} (Section~\ref{sec:consensus}).
The design treats these as semi-opaque building blocks, formally defined as follows.

\subsection{Dissemination Layer}
\label{sec:dissemination}

The \emph{dissemination layer} allows \collectors to send \payloads
to all validators.
It is defined in terms of a \textsc{broadcast} primitive,
used by \collectors to submit \payloads to validators,
and a \textsc{deliver} primitive,
that notifies the reception of a \payload to a validator.
We consider a weak form of broadcast~\cite{cachin2011introduction},
known as Best-Effort Broadcast (BEB), formally defined by:

\begin{itemize}[itemsep=0.1em]
	\item \textbf{BEB-Validity}: If a correct sender node $s$
	(\collector) broadcasts a \payload $p$,
	then every correct destination node (validator)
	eventually delivers $p$.

	\item \textbf{BEB-Integrity}: If a destination node (validator)
	delivers a \payload $p$ from sender $s$
	and $s$ is a correct node (\collector),
	then $s$ has broadcast $p$.
\end{itemize}

BEB is an \emph{unreliable} broadcast primitive.
Delivery of \payloads is only guaranteed when the sender (a \collector) is
correct---and potentially re-sends the same \payload multiple times.
If the sender of a \payload $p$ is faulty, $p$ may only be delivered to a
subset of correct destinations (validators).
This is particularly true when a \collector is Byzantine, in which case it may
send $p$ to some validators and an equivocating different \payload $p'$ to other validators.
As a result, \protoname's correctness should not rely on dissemination-layer
guarantees, but on the properties of the agreement layer and the proper
construction of the multi-proposer protocol.

\subsection{Agreement Layer}
\label{sec:consensus}

The \emph{agreement layer} enables validators to agree on each block of
transactions to append to the blockchain.
It runs consensus, defined in terms of \emph{proposed} values that are
eventually \emph{decided}.
We consider a variant of the Byzantine consensus problem that observes the
following properties:

\begin{itemize}[itemsep=0.1em]
	\item \textbf{Agreement}: No two correct validators decide on different values.
	\item \textbf{Termination}: All correct validators eventually decide on a value.
	\item \textbf{Validity}: A decided value is valid, i.e., it satisfies 
		the predefined \textsc{valid()} predicate.
\end{itemize}

\noindent
This variant of BFT consensus adopts an application-specific \textsc{valid()}
predicate to indicate whether a value is valid~\cite{cachin2001asynchronous}.
In the context of typical blockchain systems, the proposed value is a block and it is
only valid if it contains an appropriate hash of the last block added to the
blockchain.
Moreover, the block should only include valid transactions, where transaction
validity is defined by the semantics of the application that builds on top of
the blockchain.

\subsubsection{Tendermint Consensus}
\label{sec:tendermint}

The Tendermint consensus algorithm~\cite{buchman2018tendermint}
is \protoname's specific agreement protocol.
Tendermint proceeds as a sequence of consensus instances, called
\emph{heights}, where each height decides a value.
A height consists of one or more \emph{rounds}, always starting from round $0$.
Each round is led by a designated validator, the block assembler\footnote{
	In Tendermint, the distinguished validator leading a round is called
	the proposer.
	We do not follow the original nomenclature to avoid confusion with the
	\collector role, key for the introduced multi-proposer protocol.
}, and consists of an attempt to reach a decision.

Let $n$ denote the number of validators and $f$ the maximum number of
faulty validators.
Tendermint requires $n > 3f$.
A \emph{quorum} is any set of at least $n-f$ or, more commonly, $2f{+}1$ validators.
Because at most $f$ of them can be faulty, the majority of every quorum
is always correct.
Note that Tendermint supports weighted voting, where validators hold distinct
voting powers and thresholds are expressed over cumulative voting power
rather than validator counts~\cite{gifford1979weighted}.
The protocol presented in this paper generalises directly to weighted
voting: replace $n$ with total voting power $N$, $f$ with cumulative
faulty voting power $F$, and validator counts with aggregated voting
power in all threshold conditions.
However, for simplicity in presentation, we assume equal voting power.

The block assembler for the first round of each height is selected by a
round-robin schedule over the validator set.

A round of consensus consists of three steps:
\emph{propose}, \emph{prevote}, and \emph{precommit}, 
each associated with a message type,
mirroring PBFT's three-phase communication
pattern~\cite{castro1999pbft}:

\begin{enumerate}[itemsep=0.1em,leftmargin=1.2em]
    \item \emph{propose} step: at the start of a round,
        the designated \emph{block assembler} broadcasts a
        \texttt{PROPOSAL} message containing a proposed value $v$
        to all validators.

    \item \emph{prevote} step: upon receiving a \texttt{PROPOSAL},
    and provided that the proposed value $v$ can be accepted,
    a validator broadcasts a \texttt{PREVOTE} for $v$;
    otherwise, it broadcasts a \texttt{PREVOTE} for $nil$.
    Acceptance is determined by protocol rules and
    by an application-specific predicate \textsc{valid($v$)}.

    \item \emph{precommit} step: if a validator receives a valid
    \texttt{PROPOSAL} for $v$ and \texttt{PREVOTE} for $v$
    from a quorum of validators, it broadcasts a \texttt{PRECOMMIT} for $v$;
    otherwise, if no value receives a quorum of prevotes,
    the validator broadcasts a \texttt{PRECOMMIT} for $nil$.
\end{enumerate}

Upon receiving a valid \texttt{PROPOSAL} for $v$ and
\texttt{PRECOMMIT} messages for $v$ from a quorum of validators, the
validator decides $v$ at that height.
Otherwise, if no value is precommitted by a quorum, the validator waits for a
timeout and moves to the next round.
We refer to a set of identical \texttt{PRECOMMIT}s---for the same height,
round, and value---from a quorum of validators as a \emph{commit} certificate,
as it evinces that a round of consensus has reached a decision.

\subsubsection{Vote Extensions}
\label{sec:extensions}

Vote extensions are a feature of Tendermint-style consensus\footnote{
	Introduced by CometBFT~\cite{cometbft_software_2026}, the implementation of
	Tendermint's consensus algorithm used in Cosmos blockchains~\cite{buchman_kwon_2016}.
	More details regarding its release in:
	\url{https://informal.systems/blog/abci-v2-unlocks-this}.
} that allows the application layer to attach arbitrary data to consensus
messages broadcast by validators.
Concretely, before casting a \texttt{PRECOMMIT} for a non-$nil$
value, a validator invokes the \textsc{ExtendVote} primitive.
Through this interface, the application may return an arbitrary value which is
attached by the validator to its \texttt{PRECOMMIT} message.
The extension, covered by the validator's signature, is cryptographically bound to
and disseminated together with the \texttt{PRECOMMIT}.
Conversely, when a validator receives a \texttt{PRECOMMIT} for a
non-$nil$ value from another validator, it invokes the
\textsc{VerifyVoteExtension} primitive.
Through this interface, the application validates the extension attached to the
\texttt{PRECOMMIT}.
If this validation fails, the validator disregards the \texttt{PRECOMMIT} message.

A validator decides at a consensus height once it obtains a \emph{commit}
certificate, containing \texttt{PRECOMMIT} messages from at least $2f{+}1$
validators.
Each of these messages carries a (possibly empty) vote extension produced by
the application layer of its sender.
This means that vote extensions from at least $f{+}1$ correct validators
are included in any \emph{commit} certificate\footnote{
	A quorum has at least $2f{+}1$ validators and at most $f$ are faulty,
	so at least $f{+}1$ are correct.
}.
In \protoname, validators attest to received \payloads by including their
concise identifiers in vote extensions.
Thus, \emph{commit} certificates represent a tamper-evident record of the
\payloads attested to by a quorum of validators.
This limits the block assembler's power of selecting which \payloads to include
in its proposal, which is at the core of \protoname's bounded transaction
inclusion guarantee.

\section{Protocol}
\label{sec:protocol}

This section presents \protoname.
We first provide an overview of the protocol (Section~\ref{sec:overview}),
then describe the main steps of its operation (Section~\ref{sec:operation}),
provide some relevant implementation details (Section~\ref{sec:details}),
and discuss the concern of malicious proposers (Section~\ref{sec:byz_proposers}).


\begin{figure}
\begin{center}
\begin{tikzpicture}[scale=0.9, every node/.style={transform shape},
  layer/.style={draw, rounded corners, minimum width=12.5cm, minimum height=1.2cm, align=center, font=\small},
  arrow/.style={-{Stealth[length=2.5mm]}, thick},
  biarrow/.style={{Stealth[length=2.5mm]}-{Stealth[length=2.5mm]}, thick},
  lbl/.style={font=\footnotesize, fill=white, inner sep=1pt}
]

\node[layer, fill=blue!8] (app) at (0,4.5) {\textbf{Application Layer}};
\node[layer, fill=orange!12] (protocol) at (0,2.2) {\textbf{\protoname: Arc Multi-Proposer Protocol}};
\node[layer, fill=green!8, minimum width=2cm] (dissemination) at (-4,-0.8) {\textbf{Dissemination}\\\textbf{Layer}};
\node[layer, fill=red!8, minimum width=7cm] (consensus) at (1.75,-0.8) {\textbf{Agreement Layer} \\ (Tendermint/Malachite)};

\coordinate (carrtop) at (protocol.south -| consensus.north);
\coordinate (carrbot) at (consensus.north);
\draw[biarrow] (carrtop) -- (carrbot);
\node[lbl, align=center] at ($(carrtop)!0.5!(carrbot)$) {
	\textsc{ReceivedProposal($h, ids$)} \\  \textsc{ExtendVote($h,ids$)},\\\textsc{VerifyVoteExtension($ext$)}
};
\draw[arrow] ($(consensus.north east) + (-0.3,0)$) -- node[lbl, align=center] {\textsc{Decided}($h$, $ids$)} ($(protocol.south -| consensus.north east) + (-0.3,0)$);
\draw[arrow] (dissemination.north) -- node[lbl, align=center] {\textsc{deliver($p_i$)}} (protocol.south -| dissemination.north);
\draw[biarrow] ($(protocol.north -| dissemination.north)$) -- node[lbl, align=center] {\textsc{validPayload($p_i$)}} ($(app.south -| dissemination.north)$);
\draw[arrow] ($(protocol.north -| consensus.north east) + (-0.2,0)$) -- node[lbl, left=2pt] {\textsc{Finalized}($h$, $\{p_1, p_2, ...\}$)} ($(app.south -| consensus.north east) + (-0.2,0)$);
\draw[arrow] ($(protocol.south -| consensus.north west) + (0.3,0)$) -- node[lbl, align=center] {\textsc{GetValue($h$)}} ($(consensus.north west) + (0.3,0)$);

\end{tikzpicture}
\end{center}
\caption{
Architecture of a \emph{validator} node.
The \protoname logic sits between the application and the underlying dissemination
and agreement layers.
Each $p_i$ represents a \emph{\payload} broadcast by a \collector and received
by the validator.
$\textsc{GetValue}(h)$ returns a \emph{commit} certificate to propose in the
agreement layer; validators extract the decided set of \payload $ids$ from it.
The \textsc{Decided} primitive also returns a \emph{commit} certificate,
represented as a set of \payload $ids$ for brevity.
}
\label{fig:architecture}
\vspace{-0.5em}
\end{figure}

\subsection{Overview}
\label{sec:overview}

\protoname is a \emph{multi-proposer} construction on top of
Tendermint~\cite{buchman2018tendermint}, a single-proposer BFT consensus
algorithm where validators take turns as block assemblers\footnote{
	The construction also fits algorithms where the role of the
	block assembler is fixed, such as PBFT~\cite{castro1999pbft}.
}.
In traditional blockchain protocols, the role of block assembler consists of
(i) collecting user transactions (e.g., in the mempool),
(ii) assembling them into a block, and
(iii) proposing the block for consensus.
In single-proposer protocols, step (iii) commonly combines both the dissemination of and
the agreement on the proposed block.

\protoname splits those responsibilities across multiple nodes in the same
height.
A node playing the \emph{\collector} role assembles user transactions into a
\emph{\payload} and sends it to all validators.
Each \payload is uniquely identified by a concise identifier: its $id$.
A node that acts as a \emph{validator} receives and stores \payloads from
\collectors.
Validators \emph{attest} to \payloads by exchanging their $ids$ with each other
via vote extensions.
The \emph{block assembler}, itself a validator, proposes a set of
\payload $ids$ attested by more than $f$ validators.
Recall that we say that a \payload is \emph{certified} when attested by all
correct validators, i.e., by at least $2f{+}1$ validators.
This happens because the block assembler can disregard or censor
vote extensions from up to $f$ validators; if a \payload $id$ appears in more than $2f$ attestations,
then the block assembler has no choice but to include it in its proposal,
as discussed in Section~\ref{sec:extensions}.
%

When validators reach a decision for a height, \protoname maps the decided \payload
$ids$ to the set of corresponding full \payloads, which are
\emph{finalized} and subsequently delivered to the application.
Under normal operation, validators will have already received the
\payloads corresponding to the set of decided $ids$.
If a validator misses some \payload, it will wait until they are obtained,
requesting their retransmission if necessary.
A \payload $id$ can only be decided in a height if it was attested
by more than $f$ validators.
This means at least one correct validator exists that is able to retransmit the full
\payload.

\subsection{Operation}
\label{sec:operation}


\begin{algorithm}[h!]
\caption{\protoname pseudo-code at a validator running Tendermint consensus protocol.}
\begin{algorithmic}[1]

\Init{}
	\State $\V{ordered}, \, \V{payloads} \gets nil$ \Comment{Maps ids $\rightarrow$ values}
	\State $\V{next} \gets 1, \, \V{attestations} \gets \emptyset, \,\V{pending} \gets \emptyset$
\EndInit

\Space

\Space \hspace{-\algorithmicindent} \algcmt{\emph{// A validator receives \payloads from proposers, via the dissemination layer}}
\Recv{\textsc{deliver}$\langle payload \rangle$}{some \collector} \label{algo:deliver}
    \State $id \gets \textsc{id}(payload)$
	\If{ $id \notin \bigcup \V{ordered} \: \land \: \V{payloads}[id] = \emptyset \: \land$ \textsc{validPayload}($payload$)}
    \label{algo:valid}
		\State $\V{pending} \gets \V{pending} \cup \{id\}$
		\State $\V{payloads}[id] \gets payload$ \label{algo:deliverend}
	\EndIf
\EndRecv

\Space

\Space \hspace{-\algorithmicindent} \algcmt{\emph{// The agreement (Tendermint) layer triggers these callbacks}}
\Event{\textsc{ReceivedProposal}}{\texttt{PROPOSAL}$(h, r, v, vr)$}{} \label{algo:proposal}
	\Return \textsc{validCommit}($v$) \label{algo:proposalend}
\EndEvent

\Space

\Event{\textsc{ExtendVote}}{\texttt{PRECOMMIT}$(h, r, v)$}{} \Comment{Return a set of \payload $ids$} \label{algo:ves}
	\Return $\V{pending} \setminus \textsc{soundIDs}(v)$ \Comment{Skip $ids$ already included in $v$} \label{algo:vesend}
\EndEvent

\Space

\Event{\textsc{VerifyVoteExtension}}{\texttt{PRECOMMIT}$(h, r, v), ext$}{} \label{algo:verify}
	\Return \textsc{validExtension}($ext$)
\EndEvent

\Space

\Event{\textsc{GetValue}}{$h$}{} \label{algo:getvalue}
	\Return $\V{attestations}$ \Comment{Propose in height $h$ the \emph{commit} certificate of height $h-1$}
\EndEvent

\Space

\Event{\textsc{Decided}}{$h, v, commit$}{} \label{algo:decided}
	\State $\V{attestations} \gets commit$ \label{algo:attestations}
	\State $\V{ordered}[h] \gets \textsc{soundIDs}(v)$ \label{algo:ordered}
	\State $\V{pending} \gets \V{pending} \setminus \V{ordered}[h]$
\EndEvent

\Space

\Space \hspace{-\algorithmicindent} \algcmt{\emph{// The multi-proposer layer delivers totally-ordered \payloads to the application layer}}
\Cond{$\V{ordered}[\V{next}] \neq \emptyset \land (\forall id \in \V{ordered}[\V{next}] : \V{payloads}[id] \neq \emptyset)$} \label{algo:finalize}
	\State $\V{decidedPayloads} \gets \{\V{payloads}[id] \mid  id \in \V{ordered}[\V{next}] \}$
	\Trigger{\textsc{Finalized}}{$\V{next}$, $\textsc{sort}(\V{decidedPayloads})$} \label{algo:sort}
	\State $\V{next} \gets \V{next} + 1$
\EndCond

\Space

\Function{soundIDs}{$commit$}
	\For{$(validator, extension) \in commit$}
		\For{$id \in extension$}
			\State $\V{count}[id] \gets \V{count}[id] + 1$
		\EndFor
	\EndFor
	\Return $\{id \mid \V{count}[id] > f\}$
    \label{algo:threshold}
\EndFunction

\end{algorithmic}
\label{algo:validator}
\end{algorithm}
\vspace{-0.5em}

\protoname operates as a layer mediating between the application, on one hand, and the dissemination
and agreement layers, on the other hand.
Figure~\ref{fig:architecture} illustrates the design, and Algorithm~\ref{algo:validator} 
shows pseudo-code that a validator runs.
Table~\ref{tab:functions} summarizes auxiliary methods.
A summary of the main steps of the protocol operation, covering both \collectors and validators, follows below:

\begin{enumerate}[leftmargin=1.2em]
	\item \textbf{Collection.} Every \emph{\collector} collects user
		transactions and assembles them into \payloads.
		The assembling logic (e.g., how large \payloads are, how long
		to wait to fill a \payload) is application specific and therefore
		not represented in the pseudo-code.

	\item \textbf{Dissemination.} A \emph{\collector} broadcasts the
		assembled \payloads to all validators, using the dissemination
		layer's \textsc{broadcast} primitive. 
        Figure~\ref{fig:architecture} does not explicitly call out this
        primitive, as it does not run at validators.

	\item \textbf{Payload Validation.} When a \emph{validator} receives a
		\payload from the dissemination layer, via the
		\textsc{deliver} primitive (line~\ref{algo:deliver}),
		it forwards the \payload to the application for a validity check.
		A validator stores valid \payloads and drops invalid ones.

	\item \textbf{Vote Extension.} When Tendermint requests a vote extension
		(line~\ref{algo:ves}), as part of the \emph{precommit}
		step, each \emph{validator} returns its set of $pending$
		\payload $ids$.
		A \payload $id$ is pending when it is valid but was not yet
		ordered by the agreement layer.
		Also, a \payload whose $id$ is being accepted by the validator
		when casting the \texttt{PRECOMMIT} message should not be
		attested again---otherwise, it can be decided in this and
		the next height, being finalized in two heights.

		Validators verify each other's vote extensions
		(line~\ref{algo:verify}) by checking that all included
		\payload $ids$ are well-formed, using the auxiliary
		\textsc{validExtension()} method.

	\item \textbf{Tendermint Proposal.} The block assembler proposes for
		height $h$ (line~\ref{algo:getvalue}) the \emph{commit}
		certificate it collected for height $h{-}1$.
		The certificate carries vote extensions, from which validators
		extract the decided \payload $ids$.
		The block assembler proposes the full \emph{commit} certificate
		because this enables other validators to validate the proposal
		construction.

	\item \textbf{Proposal Validation.} When a validator receives the Tendermint-level
		proposal from the block assembler, it asks \protoname to
		validate it (line~\ref{algo:proposal}).
		This proposal is a \emph{commit} certificate, and the validator
		check whether it is valid via the
		\textsc{validCommit()} auxiliary method.

	\item \textbf{Decision.} When the agreement layer reaches a decision
		(line~\ref{algo:decided}), a validator learns:~(i)~the decided value $v$, itself a \emph{commit} certificate,
		and (ii) the \emph{commit} certificate for height $h$, evincing
		that $v$ was decided.
		From (i), \textsc{soundIDs($v$)} extracts the \payload $ids$
		attested by more than $f$ validators in $v$'s vote
		extensions; those \payloads are finalized in height $h$.

		The \emph{commit} certificate of height $h$ (ii) is what the
		validator will use as the proposed value for the
		subsequent height $h{+}1$ (line~\ref{algo:getvalue}), in the
		case it is assigned as the block assembler.

	\item \textbf{Finalization.} Once a decision for a height is reached,
		a validator checks if all referenced \payloads are available.
		If \payloads are missing, the validator asks for their
		retransmission from the dissemination layer, and waits until they
		are received.
		Given the availability guarantees for ordered \payloads, the
		missing payloads are eventually retrieved.

		Once all the required \payloads are available
		(line~\ref{algo:finalize}), each validator derives from them
		the block to be finalized, via the \textsc{sort()}
		auxiliary method. This deterministically derives, from a set of \payloads,
        a block of transactions ordered by priority fees.
\end{enumerate}

\subsection{Implementation Details}
\label{sec:details}

This section details the auxiliary methods (Table~\ref{tab:functions})
and the operation of core steps of \protoname.

\begin{table}
\centering
\begin{tabularx}{\textwidth}{@{}lX@{}}
	\textbf{Name} & \textbf{Description} \\
	\midrule
	\textsc{id}($\payload$) & The concise unique identifier for a given $\payload$. \\
	\textsc{validPayload}($\payload$) & Application-specific validation of a $\payload$. \\
	\textsc{validCommit}($commit$) & Validates a $commit$ certificate and its vote extensions. \\
	\textsc{validExtension}($extension$) & Validates a vote $extension$, a signed set of \payload $ids$. \\
	\midrule
	\textsc{soundIDs}($commit$) & The set of \payload $ids$ attested in the $commit$ certificate by more than $f$ validators. \\
	\midrule
	\textsc{sort}($\payloads$) & Deterministically orders a given set of \payload contents. \\
	\bottomrule
\end{tabularx}
\caption{Auxiliary methods adopted by \protoname (Algorithm~\ref{algo:validator}) and their meaning.}
\label{tab:functions}
\end{table}

\paragraph{\Payload Identifiers.} \label{sec:pids}
\protoname separates \payload dissemination from agreement, and the agreement
stage (consensus) operates on concise \payload $ids$.
As Table~\ref{tab:functions} shows, we assume an \textsc{id()} method
that produces a unique identifier for a \payload.
A minimal implementation of this can just return a \emph{hash} of the full
payload, using a modern collision-resistant hash function.

A more elaborated implementation for \textsc{id()} may include the address of
the \collector that has broadcast the \payload, plus a sequence number.
In combination with a compatible implementation of \textsc{validPayload()}, this
would enable ensuring a FIFO order for disseminated \payloads.
The compatible \textsc{validPayload()} method would only accept a \payload from a
sender with a given sequence number after having received and validated
all \payloads from the same sender with smaller sequence numbers.
This would render the \textsc{validPayload()} call of line~\ref{algo:valid} blocking.
The \textsc{validExtension()} implementation would also need to be updated
accordingly.


\paragraph{\emph{Commit} Certificates.}  \label{sec:commit-certs}
The Tendermint consensus algorithm does not adopt the concept of
\emph{certificate}, although it is present in several implementations: a
certificate is a set of votes that share some characteristics.
A \emph{commit} certificate is a set of identical \texttt{PRECOMMIT}s that
evinces a decision, including the attached vote extensions, produced by their
senders.

The \textsc{validCommit()} method receives a \emph{commit} certificate and the
height $h$ it belongs to, and checks its validity.
A valid \emph{commit} contains identical \texttt{PRECOMMIT}s that only differ
by their sender, signatures, and vote extensions.
A \emph{commit} can have a single message per validator.
All signatures are verified, the number of distinct validators is checked, and all vote
extensions should be valid, according to the \textsc{validExtension()}
method.

An important corner case for the \textsc{validCommit()} method are the first
heights: the genesis height, typically $0$, and the subsequent height,
typically $1$.
The genesis height is not decided via consensus, but pre-agreed: there is no
\emph{commit} certificate for it.
As a result, there are no vote extensions for the subsequent height,
that is not supposed to order any \payload.

\paragraph{Proposal Construction.}
As Algorithm~\ref{algo:validator}, line~\ref{algo:getvalue} shows,
a block assembler proposes as value for height $h$ the
\emph{commit} certificate it has collected at height $h{-}1$.
A \emph{commit} certificate is made of \texttt{PRECOMMIT}s, and associated vote
extensions, from at least $2f{+}1$ validators.
A validator can wait a short period before starting the next height, to
possibly accumulate \texttt{PRECOMMIT}s and vote extensions from more than $2f{+}1$ validators.
On the one hand, this is positive, as it increases the \payloads count that
can be decided upon in that height.
On the other hand, this creates an opportunity for a malicious block assembler
to manipulate the proposal by selectively including or excluding vote
extensions and, consequently, \payloads.

\protoname limits the block assembler's power of selecting which \payloads are
proposed.
A valid proposal must include vote extensions from at least $2f{+}1$
validators.
In addition, as detailed in Section~\ref{sec:extensions}, vote extensions are
tamper-evident: any alteration would invalidate their signatures.
So, if a \payload $id$ is attested by all correct validators (i.e.,
by at least $n{-}f$ validators), then that \payload $id$ must be present in any valid
proposal, being attested by more than $f$ validators.
This happens because at most $f$ validators can be excluded from the at
least $2f$ validators attesting the \payload $id$.
This is the mechanism that enforces the \emph{bounded inclusion} property,
formally proved in Appendix~\ref{sec:correctness}.

\paragraph{Transaction Ordering.}
Once a set of \payload $ids$ is decided for a height and the full \payloads are
available (line~\ref{algo:finalize}), all correct validators must apply the
same logic to produce the same \emph{finalized block}.
This means transforming a set of \payloads, each one assembling multiple user
transactions, into a single totally-ordered sequence of transactions.
\protoname delegates this logic to the \textsc{sort()} method
(line~\ref{algo:sort}), which is subject to a single requirement: it must be
\emph{deterministic}.
The specific ordering policy, however, is defined at the application level.
We recommend sorting transactions across all decided \payloads in descending
order of priority fee per unit of computation, breaking ties by \payload $ids$
and transaction hashes.
This ordering is deterministic, manipulation-resistant, and market-based:
users who value earlier execution pay a higher fee.
Section~\ref{sec:mev} analyzes the MEV implications of this choice.

\paragraph{Payload Availability.}
It is possible for validators to decide on \payload $ids$ that they do not
locally possess. 
After determining the set of $ids$ to be ordered at a specific height
(line~\ref{algo:ordered}), validators consult their local $payloads$ store to
verify availability.
For any missing $id$, the validator requests transmission of it from the other
validators.
Retrieval is guaranteed because any $id$ included in the ordered set must be
supported by more than $f$ validators
(line~\ref{algo:threshold}).
This threshold ensures that at least one correct validator has received,
validated, and stored the \payload, making it available to the rest of the
network.
While this retransmission logic is omitted from the pseudocode for simplicity,
it would be triggered following the \textsc{Decided} primitive
(line~\ref{algo:decided}) to satisfy the finalization condition for \payloads
required at line~\ref{algo:finalize}.

\subsection{Malicious Proposers}
\label{sec:byz_proposers}

While \protoname is described under the assumption of honest \collectors, it is
valuable to consider what happens if some \collectors are Byzantine.
The first and more evident consequence is the production and dissemination of
\payloads at a very high rate to exhaust the resources of validators.
In this case, it is up to the application layer, via the
\textsc{validPayload()} primitive, to identify and drop \payloads that do not
contain unique and valid transactions.
Practical setups of \protoname should enable the application layer to block
\collectors identified as malicious.

A second modality of attack consists of the production of valid \payloads that
are selectively disseminated so they only reach a subset of validators.
Those \payloads are thus stored and attested to by correct validators, but they
do not reach the availability threshold of more than $f$ attesting
validators (line~\ref{algo:threshold}), required to finalize them.
As a result, their $ids$ remain in the $pending$ set of correct validators,
which may increase indefinitely in size, leading also to an equivalent
increasing in size of the exchanged vote extensions, that however does not
result in more \payloads being ordered.
Although this form of attack cannot be prevented, its consequences can be
attenuated.
At \protoname's level, validators can have policies restricting the size of
their $pending$ sets.
For instance, a \payload received during height $h$ is removed from the
$pending$ set and marked as aged, if not ordered by height $h{+}k$, where $k$
is a protocol constant.
At the dissemination layer, correct validators can exchange $ids$ of \payloads
that do not receive enough attestations, rebroadcasting them on-demand.
Another approach is to rely on a strong form of broadcast, with increased
communication costs, such as reliable broadcast~\cite{cachin2011introduction}.

A third form of attack is a variation of the second one where valid \payloads
are selectively disseminated to an important portion but not to all correct
validators.
Those \payloads therefore become certified and are ordered, but several
validators will request their retransmission, thus increasing network usage.
In the most extreme version of this attack, a \payload is only sent to one
correct validator and attested by all Byzantine validators, thus reaching
the minimal availability threshold.
The Byzantine validators can then ignore retransmission requests for that
\payload, that needs to be disseminated (again) by the single correct validator
that has received it.
It is very hard to distinguish this attack from a legitimate scenario of unreliable
communication and, as previously discussed, a dissemination layer with stronger
guarantees should minimize the impact of such attack---the cost of which is
higher message complexity in the common case.

Finally, when considering the operation of \collectors at application
level---assembling user transactions into \payloads---a number of additional
attacks should be considered.
Section~\ref{sec:openquestions} overviews some forms of \collectors attacks 
at transaction assembling level.

\section{Discussion}\label{sec:discussion}

We now discuss briefly how \protoname deals with harmful MEV activities
(Section~\ref{sec:mev}), tradeoffs this design makes (Section~\ref{sec:tradeoffs}),
and open questions (Section~\ref{sec:openquestions}).

\subsection{MEV Implications}\label{sec:mev}

As we discussed, in single-proposer BFT protocols, the block assembler holds unilateral power over transaction inclusion and ordering, creating opportunities for Maximal Extractable Value (MEV) strategies such as front-running, sandwich attacks, and transaction censorship.
\protoname constrains this power in two ways: (i)~bounded inclusion prevents \payload exclusion (and by extension, transaction censorship), and (ii)~the function $\textsc{sort}$ deterministically orders transactions and removes the block assembler's discretion to order transactions across the set of decided \payloads within any height.
In this section, we assume that \textsc{sort} orders transactions in descending order of priority fee per unit of computation.

\protoname provides the basis for deterring harmful MEV activities while allowing the efficient capturing of benign MEV gains at the protocol and application layers. Robinson et al.~\cite{robinson2024priority} formalize four requirements for \emph{competitive priority ordering}: (i)~priority ordering, (ii)~censorship resistance, (iii)~pre-transaction privacy, and (iv)~no \collector last-look.
When all four hold, priority-fee competition channels MEV into fees rather than enabling extraction.
\protoname natively satisfies three of these four conditions (i), (ii) and (iv), while (iii) can be met with  additional configurations.
Priority ordering is achieved in \textsc{sort}.
Bounded inclusion provides censorship resistance for \payloads. 
A \payload attested by all correct validators must appear in the next block, so the block assembler cannot selectively exclude competing transactions.
The block assembler submits a certificate assembled from the previous height's vote extensions rather than constructing proposals from observed transactions, which prevents last-look advantages.
We provide a full proof of correctness for this bounded inclusion property in Appendix~\ref{sec:correctness}.
The remaining condition, pre-transaction privacy, can be composed independently and is discussed in
Section~\ref{sec:openquestions}.

Two potential residual MEV vectors are also mitigated.
First, a Byzantine validator can \emph{strategically compose vote extensions} by omitting \payload identifiers.
This delays finalization by at most one additional height, after which all correct validators will have re-attested and
bounded inclusion takes effect.
Second, a block assembler collecting additional vote extensions can choose which ones to include in the certificate, potentially affecting certification of ``marginal'' \payloads not yet in all correct validators' pending sets;
nonetheless, once \payloads are disseminated at all correct validators, the bounded inclusion property takes effect and this flexibility disappears.

With front-running and sandwich attacks mitigated, the remaining forms of MEV are economically benign.
Backrunning arbitrage, liquidation competition, and time-sensitive oracle updates generate priority fee revenue for the protocol: searchers compete by bidding up fees, transferring opportunity value to the validator set.
Priority-fee ordering also enables \emph{MEV taxes}~\cite{robinson2024priority}. 
These are application-level fees set as a function of the transaction's priority fee, allowing applications to recapture a fraction of competitive MEV.
Empirical evidence from existing chains supports this model.
On Solana, where the mempool is not publicly visible, approximately 75\% of transactions carry priority fees, accounting for 39\% of total fee revenue~\cite{solanacompass2025}.
On Ethereum, where over 40\% of retail order flow routes through private RPCs~\cite{flashbots2023illuminate}, priority fees, not MEV extraction, drive the majority of validator compensation.

\subsection{Tradeoffs}\label{sec:tradeoffs}

\protoname gains a form of censorship resistance and fairness by distributing the collection role across multiple \collectors.
This introduces some tradeoffs relative to vanilla Tendermint and other multi-proposer designs.
We discuss here briefly some of these aspects, and expand on related work later in Section~\ref{sec:related}.

\protoname achieves bounded inclusion with two additional communication rounds over standard Tendermint---matching
the tight lower bound established by Abraham et al. \cite{abraham2025latencycost}
for any protocol providing censorship resistance guarantee.
Standard Tendermint finalizes a block in three communication steps (\emph{propose}, \emph{prevote}, \emph{precommit}).
\protoname adds the \payload dissemination and vote-extension attestation steps on top of these,
but removes the need for mempool gossip.

Garimidi et al.~\cite{garimidi2025mcp} (MCP) also target next-slot inclusion together with a formal hiding property---guaranteeing that the adversary learns nothing about honest nodes' transactions before the consensus decision is final.
As a consequence of bundling censorship resistance and hiding into a single construction, MCP's concrete parameterization targets a system-wide resilience of $f < n/5$.
This is required to satisfy all properties simultaneously (safety, liveness, selective-censorship resistance, and hiding).
\protoname maintains the standard $f < n/3$ threshold by treating hiding as an orthogonal concern: bounded inclusion and no-last-look are provided natively, while pre-transaction privacy can be layered independently---either operationally, through a permissioned and auditable \collector set, or cryptographically, via threshold encryption (Section~\ref{sec:openquestions}).
The tradeoff is that \protoname does not natively provide hiding, whereas MCP does, at the cost of operating under a stronger assumption on the fraction of honest nodes.

Unlike single-proposer protocols, and similarly to other multi-proposer algorithms~\cite{eip7805focil,giridharan2024autobahn}, \protoname can be affected by the ``free data availability'' problem~\cite{buterin2022pbs_censorship}.
Every correct validator must store all \payloads it observes.
If the same transaction is duplicated across \payloads, this multiplies per-height storage proportional to the number of copies.
Only the first instance of the transaction will succeed, but the others are still included in the block.
We consider this to be an acceptable tradeoff in the first version of \protoname.

\subsection{Open Questions}\label{sec:openquestions}

Three questions are immediately relevant in our short-term investigations with \protoname.

\paragraph{Pre-transaction privacy.}
Pre-transaction privacy can be achieved through at least two complementary approaches.
First, a permissioned \collector set operating over a private network provides operational privacy: transactions are never exposed to a public mempool, and \collectors are auditable entities subject to off-chain accountability.
This is the model already deployed in single-proposer architectures such as currently implemented in Arc.
Second, threshold encryption---where transaction content is encrypted under a key requiring multiple validators to decrypt after certification---provides cryptographic privacy even against \collectors themselves.
This idea has been explored extensively in blockchains~\cite{choudhuri2024mempool,boneh2025context,agarwal2025efficiently,bebel2022ferveo,boneh2025batch,bormet2025beast,bormet2024beat,zhang2023f3b} and fits \protoname's architecture, since each \payload already requires attestations from multiple validators before finalization.
The two approaches differ in trust assumptions: the first relies on \collector honesty and auditability; the second removes that assumption at the cost of additional cryptographic machinery.
Integrating either approach into the multi-proposer setting is left for future work.

\paragraph{Experimental evaluation.}
We anticipate that \protoname can lead to increased throughput over standard Tendermint.
An early prototype shows a promising increase of up to 10x in the amount of bytes decided per second, with a validator set size of $50$ nodes, each acting as a \collector.
These are very preliminary results, however, and the implementation is still in flux.
We plan to continue working towards a more mature implementation, an in-depth understanding of the bottlenecks, and eventually a comprehensive performance evaluation of \protoname, which are very important for validating the protocol's practicality.

\paragraph{Dynamic validator sets.}
In the current protocol formulation, we assume a fixed validator set.
Supporting dynamic membership with changing voting weights can be accomplished
by introducing epochs that lock the validator set for a fixed number of
heights.
The principal subtlety is \payload availability across epoch boundaries: if a
correct validator $v$ is the sole guarantor of some \payload $p$ and $v$ leaves
the validator set, $p$ may become unavailable.
Ensuring that at least one correct validator in the new set holds each pending
\payload may require a handoff protocol at epoch transitions, which is left for future work.

\section{Related Work}\label{sec:related}

There are several lines of research representing important prior art that are
related to \protoname .
A well-explored vein of multi-proposer protocols are DAG-based consensus
algorithms.
They allow all validators to propose simultaneously and create a directed
acyclic graph on top of the blocks, since each block references multiple prior
blocks.
DAG-based protocols, which are often focused on achieving high throughput,
include Tusk~\cite{danezis2022narwhal},
Bullshark~\cite{spiegelman2022bullshark}, Mysticeti~\cite{babel2023mysticeti},
Shoal~\cite{spiegelman2023shoal} and Shoal++~\cite{arun2024shoalpp}. 
While allowing for parallel proposals, DAG-based protocols fail to strip
validators of discretion over the final execution order.
In contrast, \protoname mitigates this through the bounded inclusion guarantee
paired with deterministic ordering algorithm.

Multi-leader protocols such as Mir-BFT~\cite{stathakopoulou2019mirbft} and
ISS~\cite{stathakopoulou2022iss} enable concurrent ordering by partitioning
transactions across leaders running parallel consensus instances.
Autobahn~\cite{giridharan2024autobahn} further scales this model by pipelining
consensus across multiple ``lanes'', where each lane is associated to a leader.
\protoname differs from these complex approaches as it does not partition transactions or require parallel consensus instances.
Instead, it is designed as a modular protocol layer that composes on top of a single Tendermint instance, and avoids re-implementing a full-fledge consensus protocol from scratch, as that is a significant effort.

Research in censorship resistance focuses on the eventual inclusion of a valid
transaction despite adversarial attempts to exclude it.
Prefix Consensus~\cite{xiang2026prefix} guarantees inclusion in a leaderless
architecture~\cite{antoniadis2021leaderless} within $f{+}1$ consensus heights,
where $f$ is the number of Byzantine nodes.
Tendermint also achieves the same guarantee with its rotating block assembler.
Since each height has a new block assembler, a transaction which Byzantine validators attempt to censor will be included within $f{+}1$ heights, as at least one block assembler will be honest during that interval.
\protoname's inclusion guarantee is the next height.


The FOCIL protocol~\cite{eip7805focil} also provides assurances for some
transactions to be included in the next block, but come at the cost of creating
committees of validators who determine which transactions will be included.
MCP~\cite{garimidi2025mcp} aims to achieve next block transaction inclusion and
hiding where adversaries are not able to see the contents of blocks.
However, this is only certain if the fault tolerance of the protocol is reduced
to $f < n/5$.
Otherwise, achieving censorship-resistance, hiding, and liveness becomes probabilistic.

Moving from a single proposer to multiple concurrent proposers changes the
structure of the fee market and the MEV landscape.
Stouka et al.~\cite{stouka2025fociltfm} design a transaction fee mechanism for
FOCIL that achieves bribery-resistant censorship resistance while preserving
EIP-1559 incentive compatibility, showing that economically sound
multi-proposer fee mechanisms are feasible.
Landers and Marsh~\cite{landers2025mevmcp} identify MEV channels unique to
concurrent-proposer blockchains---duplicate steals, proposer-to-proposer
auctions, and timing races driven by proof-of-availability latency---and show
that deterministic priority scheduling combined with duplicate-aware payouts
can neutralize same-tick extraction.  As described earlier
(Section~\ref{sec:mev}), Robinson et al.~\cite{robinson2024priority} identify
four conditions for achieving competitive priority ordering, and \protoname can
satisfy all four when combined with pre-transaction privacy.

Proposer-Builder Separation (PBS)~\cite{buterin2021pbs,flashbots2022mevboost}
addresses the single proposer monopoly power by splitting block construction
into two market roles: \emph{builders} assemble transaction-optimized blocks,
and the consensus \emph{proposer} selects the highest-bidding block via an
auction.
The separation prevents the proposer from needing to run sophisticated MEV
strategies itself, but it does not curtail the proposer's power to censor
transactions or the builder's power to order them extractively.
It relocates that power to builders who compete on MEV extraction efficiency.
Trusted relays (mev-boost~\cite{flashbots2022mevboost}) or proposed enshrined
auction mechanisms (ePBS~\cite{buterin2023epbs}) mediate the
proposer/builder interaction, and may introduce additional trust assumptions or
protocol complexity.
\protoname takes a structurally different approach.
Our \collector role can be seen as a mix between builders and relays in PBS, 
but two key differences are: 
(i) by distributing the collection role, any node can act as a \collector, and
(ii) bounded inclusion ensures that certified \payloads \emph{must} appear in the next block.

\section{Conclusions}
\label{sec:conclusion}

We presented \protoname, a multi-proposer protocol layer that composes on top of the Tendermint consensus algorithm.
The protocol distributes transaction collection across dedicated \collectors,
separates dissemination from agreement, and uses vote extensions to enforce bounded inclusion:
transactions included in any \payload attested by all correct validators must appear in the next block.
The bounded inclusion guarantee, combined with deterministic transaction ordering,
constrains both the block assembler's power over inclusion and over ordering.

The requirements for this protocol are grounded in research we have been doing on the Arc L1 blockchain.
Arc aims to be a bridge between traditional and decentralized finance.
In this context, it is useful to allow various parties---teams or institutions running flows or trades on
Arc---to participate directly in block construction, without necessarily assuming they need to run a validator.
Adding validators to the network can affect performance or security.
Validators in \protoname specialize in securing the network,
while \collectors engage in assembling \payloads that compose into blocks.
The key mechanism behind \protoname---using vote extensions at height $h$
to certify \payloads for inclusion at height $h{+}1$---is not specific to Tendermint consensus algorithm.
Any BFT consensus protocol that can support vote extensions likely admits the same construction.
\protoname was designed for the Arc blockchain, which is built on the Malachite consensus engine,
but we believe the approach applies wherever a single-proposer bottleneck limits transaction throughput or fairness,
or the architectural decoupling of block assemblers from \collectors
can serve to scale and specialize the two roles independently.

\newpage

\bibliographystyle{plain}
\bibliography{main}

\appendix

\section{Correctness}
\label{sec:correctness}

This section argues the correctness of \protoname,
more specifically proving the following properties:

\begin{itemize}
	\item \textbf{Agreement}: No two correct validators finalize different values.
	\item \textbf{Termination}: All correct validators eventually finalize a value.
	\item \textbf{Validity}: A finalized value is valid, i.e., it satisfies 
		an application-specific \textsc{valid()} predicate.

    	\item \textbf{Bounded Inclusion}: Any payload $p$ attested by all correct validators at height $h$ is finalized at height $h+1$.
\end{itemize}

The first three properties define BFT consensus, as in Section~\ref{sec:consensus},
but replacing decision by finalization.
This is because, while Tendermint consensus (the agreement layer) decides values,
the interface provided by \protoname (Algorithm~\ref{algo:validator})
finalizes values.

\begin{theorem}[]\label{thm:agreement}
\protoname satisfies Agreement.
\end{theorem}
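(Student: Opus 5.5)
The plan is to reduce Agreement of \protoname to the Agreement property of the underlying Tendermint agreement layer (Section~\ref{sec:consensus}), together with determinism of the local computations in Algorithm~\ref{algo:validator}. Consider two correct validators $a$ and $b$ that trigger \textsc{Finalized} for the same height $h$, with values $B_a$ and $B_b$ respectively. The goal is to show that $B_a = B_b$.

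\textbf{Step 1: both validators work from the same decided value.} By the condition at line~\ref{algo:finalize}, finalization at height $h$ happens only after $\V{ordered}[h]$ has been set. This assignment occurs at line~\ref{algo:ordered}, upon \textsc{Decided}$(h, v, commit)$. The variable $\V{next}$ starts at $1$ and is incremented by exactly one per finalization. Hence each validator finalizes heights in order, exactly once each, and the $h$-th finalization uses $\V{ordered}[h]$. By Tendermint Agreement, $a$ and $b$ receive the same decided value $v$ for height $h$.

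\textbf{Step 2: the set of identifiers is the same.} The function \textsc{soundIDs}$(v)$ is a deterministic function of $v$. It counts, for each $id$, the number of attesting extensions in the certificate, using the common threshold $f$ on a fixed validator set. Therefore $\V{ordered}[h]$ is the same set $I$ at $a$ and at $b$. One point needs checking here: $v$ is a \emph{commit} certificate that \textsc{validCommit} has already accepted. The count should therefore be taken over at most one extension per validator, and the computation depends only on $v$, not on local state. I assume $\V{count}$ is freshly zeroed on each call.

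\textbf{Step 3: identical identifiers map to identical payload contents.} This is the main obstacle, because $\V{payloads}$ is local state. It may be filled through \textsc{deliver} from possibly Byzantine \collectors, or through retransmission. I would argue that \textsc{id}() is a collision-resistant hash, or more generally that a payload is accepted as $\V{payloads}[id]$ only if $\textsc{id}(payload)=id$. This also covers retransmitted payloads, which must be checked against their requested $id$. Under that assumption, for every $id \in I$ both validators hold the unique payload with that identifier, except with negligible probability. So the sets $\V{decidedPayloads}$ coincide.

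\textbf{Conclusion.} Since \textsc{sort} is deterministic, $B_a = \textsc{sort}(\V{decidedPayloads}) = B_b$.

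A remark on the same-height reading. If Agreement is meant across different heights, it is natural to read it per height, since each height finalizes one value.

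The delicate points are therefore:
\begin{itemize}
\item the binding between identifiers and contents, including under equivocating \collectors who send different payloads with a forged or colliding $id$;
\item making explicit that the retransmission path verifies identifiers.
\end{itemize}
If \textsc{id} includes a \collector address and sequence number rather than a hash, I would instead require that the identifier commit to the content, for example by including the hash. Otherwise a Byzantine \collector could make $a$ and $b$ hold different payloads under one $id$.
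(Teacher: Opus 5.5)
Your proposal is correct and follows essentially the same route as the paper's proof. Tendermint Agreement gives a common decided certificate $v$; the determinism of \textsc{soundIDs} over a fixed validator set gives a common $\V{ordered}[h]$; the assumed unique binding between a \payload and its $id$ gives a common $\V{decidedPayloads}$; and the determinism of \textsc{sort} finishes the argument. Your extra care about the $\V{next}$ counter, about checking retransmitted \payloads against their $id$, and about equivocation under sender-plus-sequence-number identifiers only makes explicit what the paper folds into its assumption that the $id$-to-\payload association is unique.
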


\begin{proof}
Agreement is inherited from Tendermint consensus (agreement layer), which ensures that no
two correct validators decided different values in the same height.

The agreement layer returns the decision of height $h$ via the \textsc{Decided}
primitive (line~\ref{algo:decided}).
\protoname stores in $ordered[h]$ the result of the execution of the
\textsc{soundIDs()} method over the decision value $v$.
\textsc{soundIDs()} is deterministic because the
set of validators is fixed.
As a result, all correct validators, upon the decision of $h$,
store the same value in their $ordered[h]$.

The content of $ordered[h]$ is then processed by the \emph{when} condition in
line~\ref{algo:finalize}, where $next = h$.
This code retrieves \payloads from their $ids$.
The association of a \payload to its $id$ is assumed to be unique.
As a result, every correct validator will produce the same set
$decidedPayloads$ for $h$.
Since the \textsc{sort()} method is, by definition, deterministic, the
values finalized by every correct validator at $h$ are the same.
Finally, while phrased for a specific height $h$, this rationale can be
extended for every finalized height.
\end{proof}


\begin{theorem}[]\label{thm:termination}
\protoname satisfies Termination.
\end{theorem}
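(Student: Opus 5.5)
The plan is to reduce Termination of \protoname to Termination of the agreement layer, and then show that the finalization condition at line~\ref{algo:finalize} eventually becomes true for every height, in increasing order of $next$.

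\textbf{Step 1: every height is decided.} I argue by induction on $h$ that every correct validator eventually receives \textsc{Decided} for height $h$. Tendermint's Termination holds provided that a correct block assembler, after GST, proposes a value that correct validators accept, so I check this requirement.

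For $h \geq 2$, the value returned by \textsc{GetValue}($h$) (line~\ref{algo:getvalue}) is the \emph{commit} certificate of height $h{-}1$. This certificate consists of $2f{+}1$ properly signed \texttt{PRECOMMIT}s. Their vote extensions were already accepted by \textsc{VerifyVoteExtension}, so they satisfy \textsc{validExtension()}. Hence \textsc{validCommit}($v$) returns true at every correct validator (line~\ref{algo:proposal}).

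For height $1$, the genesis corner case (no certificate, empty value) is accepted by the convention stated for \textsc{validCommit()}.

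The callbacks \textsc{ExtendVote} and \textsc{VerifyVoteExtension} are local and non-blocking. Therefore \protoname never prevents a correct validator from casting \texttt{PRECOMMIT}s or accepting correct ones. By Tendermint's Termination, every height is eventually decided, and line~\ref{algo:ordered} sets $ordered[h]$.

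\textbf{Step 2: all ordered payloads become available.} This is the main obstacle. Every $id \in ordered[h] = \textsc{soundIDs}(v)$ appears in the extensions of more than $f$ distinct validators. Since at most $f$ validators are faulty, at least one correct validator $c$ included $id$ in its extension.

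A correct validator only puts an $id$ in $pending$ after storing a valid $payloads[id]$ (lines~\ref{algo:valid}--\ref{algo:deliverend}). The map $payloads$ is never deleted from. Hence $c$ holds the full \payload permanently and can serve retransmission requests. The network assumption says Byzantine nodes cannot block communication between correct nodes, and after GST messages arrive within $\Delta$. So every correct validator missing the \payload eventually obtains it, and it passes \textsc{validPayload} because $c$ validated it.

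I need to be careful at one point: the check at line~\ref{algo:valid} rejects $ids$ already in $\bigcup ordered$. A retransmitted payload must therefore be stored through the retransmission path rather than through \textsc{deliver}. I will state this explicitly as the intended behavior of the (omitted) retrieval logic following line~\ref{algo:decided}. I also need uniqueness of \textsc{id}, so that the retrieved \payload is the one that was attested.

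\textbf{Step 3: induction on $next$.} Combining Steps~1 and~2, for $next = h$ both conjuncts of the \emph{when} condition at line~\ref{algo:finalize} eventually hold. In particular, every $id$ in $ordered[h]$ has a stored \payload. The case where $ordered[h]$ is empty, e.g.\ at height $1$, is handled by treating it as finalizing an empty block. \textsc{sort()} terminates, so \textsc{Finalized} is triggered for $h$, and $next$ advances to $h{+}1$.

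By induction on $h$, every correct validator finalizes a value at every height, which proves Termination.
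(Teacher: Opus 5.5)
Your proof is correct and follows essentially the same route as the paper: decisions at every height come from Tendermint's Termination, and the retransmission argument makes the \emph{when} condition at line~\ref{algo:finalize} eventually hold for each $next = h$ (every $id$ in $\textsc{soundIDs}(v)$ has more than $f$ attesters, so some correct validator stores the \payload). The extra points you check---that a correct assembler's certificate passes \textsc{validCommit()}, that retrieved \payloads must bypass the $\bigcup ordered$ filter at line~\ref{algo:valid}, that an empty $ordered[h]$ must still be finalizable, and the explicit induction on $next$---are left implicit in the paper's short proof, and they only sharpen it.
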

\begin{proof}
Termination is inherited from Tendermint consensus (agreement layer),
which ensures that every correct validator eventually decides a value in a height $h$.
From Lemma~\ref{lem:retransmission}, proved below,
we know that once a value is decided for $h$,
correct validators eventually satisfy the condition on line~\ref{algo:finalize}
for $next = h$ and finalize a value for $h$.
\end{proof}

\begin{theorem}[]\label{thm:validity}
\protoname satisfies Validity.
\end{theorem}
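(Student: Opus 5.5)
The plan is to reduce Validity of finalized values to two facts: every \payload finalized at height $h$ passed the application check \textsc{validPayload()} at some correct validator, and \textsc{sort()} only reorders contents without adding anything. The application-level predicate \textsc{valid()} for a finalized value is taken to be: the value is a deterministic ordering of a set of \payloads, each of which satisfies \textsc{validPayload()}.

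The argument follows the flow of Algorithm~\ref{algo:validator} backwards from finalization.

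First, a correct validator finalizes at height $h$ only through the \emph{when} condition at line~\ref{algo:finalize}. There it applies \textsc{sort()} to $\{payloads[id] \mid id \in ordered[h]\}$.

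Second, $ordered[h] = \textsc{soundIDs}(v)$, where $v$ is the value decided by Tendermint. By Tendermint's Validity, $v$ satisfied \textsc{validCommit()} (line~\ref{algo:proposal}), so $v$ is a well-formed \emph{commit} certificate with signed vote extensions from distinct validators. \textsc{soundIDs()} keeps only $ids$ attested by more than $f$ distinct validators. At most $f$ validators are faulty, so each such $id$ was attested by at least one correct validator.

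Third, a correct validator puts an $id$ in its vote extension only if it is in its $pending$ set (line~\ref{algo:ves}). An $id$ enters $pending$ only at line~\ref{algo:valid}, and only after \textsc{validPayload}($payload$) returned true with $id = \textsc{id}(payload)$. So every $id$ in $ordered[h]$ corresponds to a \payload the application deemed valid.

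Finally, the \payload that the finalizing validator holds under that $id$ is the same one. This holds either because the validator validated it itself (line~\ref{algo:valid}) or because it retrieved it by retransmission and checked it against $\textsc{id}()$. By uniqueness of identifiers, as in the proof of Theorem~\ref{thm:agreement}, it is the validated \payload. Since \textsc{sort()} is deterministic and only orders transactions of the given \payloads, the finalized value consists solely of valid \payloads.

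The main obstacle is that \textsc{valid()} is not pinned down in the paper, and transaction-level validity may not be closed under combining \payloads. For example, duplicate transactions across \payloads are acknowledged in Section~\ref{sec:tradeoffs}. I would handle this by stating explicitly that block validity is defined as validity of constituent \payloads plus deterministic ordering. State-dependent conflicts, such as a transaction being invalid because of an earlier one, are then resolved at execution time by the application (only the first instance succeeds). I would also note in passing that Tendermint's own \textsc{valid()} condition, the hash of the previous block, is inherited directly from the agreement layer.
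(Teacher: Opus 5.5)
Your proof is correct and follows the paper's argument. Like the paper, you define \textsc{valid()} as every constituent payload satisfying \textsc{validPayload()}, invoke \textsc{validCommit()} together with Tendermint's Validity for the decided certificate, and trace each finalized id back through the $pending$ set to a \textsc{validPayload()} check at a correct validator. You are more careful than the paper on two points it leaves implicit or states loosely: that the $>f$ threshold in \textsc{soundIDs()} forces at least one correct attester, and that id-uniqueness guarantees the finalizing validator holds that same validated payload.
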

\begin{proof}

\protoname uses \textsc{validCommit()} to check the validity of a proposed value $v$
(line~\ref{algo:proposalend}), where $v$ is a \emph{commit} certificate---since
the proposed value is a \emph{commit} certificate (line~\ref{algo:getvalue}).
So, if the agreement layer returns $v$ as the decision of height $h$,
via the \textsc{Decided} primitive (line~\ref{algo:decided}), then $v$
is necessarily a valid \emph{commit} certificate.
Section~\ref{sec:commit-certs}, in the  ``\emph{Commit} Certificates'' block,
precisely defines the validity of a \emph{commit} certificate.

\protoname, in its turn, needs its own definition of \textsc{valid()}.
Let $V$ be a value finalized by \protoname.
$V$ is built, via the deterministic \textsc{sort()} method,
from a set of \payloads $P$ (line~\ref{algo:sort}).
We thus define the \textsc{valid()} predicate for \protoname as 
$\textsc{valid}(V) := \forall p \in P : \textsc{validPayload($p$)}$.

\protoname validates the received \payloads (line~\ref{algo:valid})
using the \textsc{validPayload()} method.
Only \payloads considered valid by this method are stored
and have their $ids$ added to the $pending$ set by any correct validator.
As a result, a correct validator only attests to \payloads that are valid.
Since \payloads attested by at least one correct validator can be finalized,
finalized values have necessarily to be valid, according with above defined
$\textsc{valid()}$ predicate.

\end{proof}

\begin{theorem}[]\label{thm:bounded}
\protoname satisfies Bounded Inclusion.
\end{theorem}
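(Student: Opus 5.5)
I will fix a \payload $p$ with identifier $id = \textsc{id}(p)$ and make precise the hypothesis ``attested by all correct validators at height $h$''. I read it as: every correct validator that casts a non-$nil$ \texttt{PRECOMMIT} at height $h$ returns, from \textsc{ExtendVote} (line~\ref{algo:ves}), an extension containing $id$. In other words, $id \in pending \setminus \textsc{soundIDs}(v)$ at that moment, where $v$ is the value being precommitted at $h$. I will then trace $id$ through the pipeline. The decision of height $h$ hands a \emph{commit} certificate to line~\ref{algo:attestations}. That certificate is proposed at height $h{+}1$ (line~\ref{algo:getvalue}), validated by \textsc{validCommit()}, and decided. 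Its sound identifiers are stored in $ordered[h{+}1]$ (line~\ref{algo:ordered}), and they are finally finalized at line~\ref{algo:finalize}.

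The first key step is a counting argument on the value $v'$ decided at height $h{+}1$. By Theorem~\ref{thm:validity} and the agreement layer's Validity, $v'$ satisfies \textsc{validCommit()}. I will argue that $v'$ must be a \emph{commit} certificate for height $h$: it carries \texttt{PRECOMMIT}s, each with signed extensions, from at least $2f{+}1$ distinct validators. At most $f$ of these can be faulty, so at least $f{+}1$ come from correct validators. Because signatures bind the extensions (Section~\ref{sec:extensions}), each correct validator's extension in $v'$ is exactly the one it produced at height $h$, and so it contains $id$. Hence $count[id] \ge f{+}1 > f$, and $id \in \textsc{soundIDs}(v')$. By Theorem~\ref{thm:agreement}, every correct validator then stores $id \in ordered[h{+}1]$.

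Next I will show that finalization actually happens at $h{+}1$. Termination (Theorem~\ref{thm:termination}) together with Lemma~\ref{lem:retransmission} ensures that the condition at line~\ref{algo:finalize} eventually holds for $next = h{+}1$. At that point $p$, being in $ordered[h{+}1]$, is part of $decidedPayloads$ and is finalized at $h{+}1$. Uniqueness of ids ensures that the retrieved \payload is $p$ itself.

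I expect the main obstacle to be the edge case caused by the exclusion $pending \setminus \textsc{soundIDs}(v)$. If $id$ was already sound in the value $v$ decided at height $h$, then correct validators omit it, and $p$ is finalized at $h$ rather than $h{+}1$. The hypothesis rules this case out, since under my reading correct validators do attest $id$ at $h$. Still, I will state explicitly that an attested $id$ cannot also be finalized at $h$, and cannot be finalized later, because line~\ref{algo:valid} and the $pending$ update prevent re-attestation. A second subtlety is that \texttt{PRECOMMIT}s in a valid certificate must all come from the same round of height $h$. Correct validators may precommit in several rounds, so I need attestation of $id$ in every round in which they precommit. I will handle this by noting that $id$ stays in $pending$ throughout height $h$: it is removed only upon \textsc{Decided} for $h$. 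So each correct \texttt{PRECOMMIT} at $h$, in whatever round, carries $id$.
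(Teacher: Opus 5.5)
Your proof is correct and follows the paper's argument. The counting step is the paper's Lemma~\ref{lem:present}/\ref{lem:decision}: any certificate for height $h$ holds at least $f{+}1$ correct \texttt{PRECOMMIT}s, each carrying $\textsc{id}(p)$, so $\textsc{id}(p)$ is in $\textsc{soundIDs}$ of the value decided at $h{+}1$. Finalization then comes from Lemma~\ref{lem:retransmission} and Termination, also as in the paper. Your appeal to the agreement layer's Validity on the decided value is a slightly more direct substitute for the paper's case split on correct versus Byzantine block assemblers.

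One caveat on your rounds remark: ``$\textsc{id}(p)$ stays in $pending$'' only carries attestation forward to later rounds, and the exclusion $\textsc{soundIDs}(v)$ can differ between rounds. What actually covers a deciding round earlier than some validator's first attestation is your strong reading of the hypothesis, that every non-$nil$ correct \texttt{PRECOMMIT} at $h$ carries $\textsc{id}(p)$. The paper's Lemma~\ref{lem:monotonicity}-based argument also glosses over this case.
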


\begin{proof}
Bounded Inclusion is the property of \protoname that is not directly
inherited from Tendermint consensus, but relies on some
lemmas defined later.

Lemma~\ref{lem:decision} states that if all correct validators attest
\payload $p$ in height $h$, then $p$ will be present in the certificate decided in $h+1$.
By Definition~\ref{def:present} (Present in Certificate), $\textsc{id}(p)$
appears in vote extensions of more than $f$ validators in
the \emph{commit} certificate, which is the condition checked by
\textsc{soundIDs()} (line~\ref{algo:threshold}).
Therefore, $\textsc{id}(p)$ will be selected by the \textsc{soundIDs()} method
over the decision of height $h{+}1$ and will be part of $ordered[h{+}1]$.

Lemma~\ref{lem:retransmission} states that eventually all correct validators receive a \payload that is present in a certificate.
This means that if a correct validator
included $\textsc{id}(p)$ in its $ordered[h{+}1]$ set,
then every correct validator eventually retrieves the full \payload $p$.
This is valid not only for $p$ but for every \payload with $id$ in $ordered[h{+}1]$.
As a consequence, the \emph{when} condition on line~\ref{algo:finalize}
is eventually satisfied for $next = h{+}1$ and $p$ is finalized
in height $h{+}1$, as established by the property.





\end{proof}


In the following, we define some important terms and formalize the required lemmas.

\begin{definition}[\emph{Commit} certificate]
A \emph{commit} certificate for height $h$ is a collection of valid, unique \texttt{PRECOMMIT} messages (along with their vote extensions)
evincing the same decision from at least $2f{+}1$ validators.
\end{definition}

\begin{definition}[Attest]
A validator attests to a \payload $p$ for height $h$ if the validator includes
$\textsc{id}(p)$ in any of the vote extensions it produces in height $h$.
\end{definition}

\begin{definition}[Present in Certificate] \label{def:present}
A \payload $p$ is present in a \emph{commit} certificate $c$ if
$\textsc{id}(p)$ is included in the vote extensions of more than $f$
validators.
\end{definition}

\begin{lemma}\label{lem:monotonicity}
If a correct validator includes $\textsc{id}(p)$ in its vote extension for round $r$ of height $h$,
then it includes $\textsc{id}(p)$ in its vote extension for every subsequent round $r' > r$
of height $h$ in which the proposed value does not already contain $\textsc{id}(p)$.
\end{lemma}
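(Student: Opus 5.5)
We prove Lemma~\ref{lem:monotonicity} by reasoning directly about the pseudo-code of Algorithm~\ref{algo:validator}. In the \textsc{ExtendVote} callback (lines~\ref{algo:ves}--\ref{algo:vesend}), the extension is computed as $\V{pending} \setminus \textsc{soundIDs}(v)$, where $v$ is the value being precommitted. So the task reduces to two facts. First, if $\textsc{id}(p) \in \V{pending}$ at the time of the round-$r$ extension, then $\textsc{id}(p)$ is still in $\V{pending}$ at the time of any later extension within height $h$. Second, when the value $v'$ precommitted in round $r'$ does not contain $\textsc{id}(p)$, i.e.\ $\textsc{id}(p) \notin \textsc{soundIDs}(v')$, the set difference keeps $\textsc{id}(p)$.

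For the first fact we inspect every statement that modifies $\V{pending}$. The \textsc{deliver} handler (lines~\ref{algo:deliver}--\ref{algo:deliverend}) only adds elements. The only removal is in the \textsc{Decided} handler, which subtracts $\V{ordered}[h]$. \textsc{ExtendVote}, \textsc{VerifyVoteExtension}, \textsc{GetValue}, \textsc{ReceivedProposal}, and the finalization condition do not write to $\V{pending}$ at all. It therefore suffices to argue that \textsc{Decided} for height $h$ is not triggered between the round-$r$ precommit and any round-$r'$ precommit of the same height $h$. This follows from Tendermint's structure: a correct validator that has decided height $h$ moves to height $h{+}1$ and casts no further \texttt{PRECOMMIT}s, hence produces no further extensions, for height $h$. \textsc{Decided} events for earlier heights $h'' < h$ have all fired before the validator entered height $h$, so none can occur during it.

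This heights-and-rounds ordering is the step I expect to need the most care. The pseudo-code leaves the interaction with the agreement layer implicit, so I will state explicitly the assumption that callbacks are invoked sequentially and that a validator executes rounds of height $h$ only before its decision for $h$. With that assumption, $\textsc{id}(p) \in \V{pending}$ holds throughout height $h$ after the round-$r$ extension. Combining this with the hypothesis $\textsc{id}(p) \notin \textsc{soundIDs}(v')$ gives $\textsc{id}(p) \in \V{pending} \setminus \textsc{soundIDs}(v')$, which is the extension returned at round $r'$.

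One subtlety should be addressed. In a round $r'$ where the validator precommits $nil$, no extension is produced (Section~\ref{sec:extensions}), so the lemma is vacuous for such rounds. We will read ``vote extension for round $r'$'' as referring only to rounds with a non-$nil$ precommit.
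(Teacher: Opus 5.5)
Your proof is correct and follows essentially the same route as the paper's. Both arguments note that the extension is $\mathit{pending} \setminus \textsc{soundIDs}(v)$. The set $\mathit{pending}$ is only enlarged by the \textsc{deliver} handler and only shrunk by \textsc{Decided}, which cannot fire between two rounds of the same height. Hence $\textsc{id}(p)$ survives in round $r'$ unless it lies in $\textsc{soundIDs}(v')$. Your explicit height/round ordering assumption and your remark that $nil$ precommits carry no extension make precise what the paper leaves implicit.
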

\begin{proof}
A correct validator's vote extension is produced by \textsc{ExtendVote}
(line~\ref{algo:ves}), which returns $pending \setminus \textsc{soundIDs}(v)$,
where $v$ is the proposed value for that round.
The $pending$ set is modified in two places:
identifiers are added upon delivery of valid \payloads (line~\ref{algo:deliverend})
and removed only when the agreement layer decides a height (line~\ref{algo:decided}).
No decision occurs between rounds of the same height,
so $pending$ can only grow during $h$.
Therefore, if $\textsc{id}(p) \in pending$ at round $r$,
then $\textsc{id}(p) \in pending$ at every round $r' > r$ of $h$.
The only reason $\textsc{id}(p)$ would not appear in the vote extension at $r'$
is if $\textsc{id}(p) \in \textsc{soundIDs}(v')$,
where $v'$ is the value proposed in round $r'$---meaning
$\textsc{id}(p)$ is already attested by more than $f$ validators in $v'$.
\end{proof}

\begin{lemma}\label{lem:present}
If payload $p$ is attested to by all correct validators at height $h$, then $p$
will be present in any certificate $c$ for $h$.
\end{lemma}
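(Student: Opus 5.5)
The plan is a quorum-intersection counting argument. It combines the definition of a \emph{commit} certificate with the tamper-evidence of vote extensions from Section~\ref{sec:extensions}.

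First fix an arbitrary certificate $c$ for height $h$. By the definition of \emph{commit} certificate, $c$ contains valid \texttt{PRECOMMIT}s, one per validator, from a set $Q$ of at least $2f{+}1$ distinct validators. At most $f$ validators are faulty, so $Q$ contains at least $f{+}1$ correct validators. Because each \texttt{PRECOMMIT} is signed together with its extension, the extension attributed to a correct validator $q \in Q$ in $c$ is exactly the one $q$ produced through \textsc{ExtendVote} (line~\ref{algo:ves}). A Byzantine assembler cannot strip identifiers from it or forge it, since \textsc{validCommit()} checks all signatures.

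Next I use the hypothesis. Every correct validator attests $p$ at height $h$, meaning $\textsc{id}(p)$ appears in the vote extensions they produce in $h$. I need this to hold for the specific extension each correct $q \in Q$ contributed to $c$, which is the extension of the round in which $c$'s \texttt{PRECOMMIT}s were cast. Here I invoke Lemma~\ref{lem:monotonicity}: once $q$ has included $\textsc{id}(p)$, it keeps including it in every later round of $h$. The only exception is a round whose proposed value $v$ already has $\textsc{id}(p) \in \textsc{soundIDs}(v)$.

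I therefore split into two cases.
\begin{itemize}
\item \emph{Normal case.} $\textsc{id}(p)$ is in the extensions of all at least $f{+}1$ correct members of $Q$. Then it appears in more than $f$ extensions of $c$, so $p$ is present in $c$ by Definition~\ref{def:present}.
\item \emph{Exceptional case.} The \texttt{PRECOMMIT}s of $c$ are for a value $v$ in which $\textsc{id}(p)$ is already sound. Then $p$ is ordered at height $h$ through $v$.
\end{itemize}

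The main obstacle is making the reading of ``attested by all correct validators at height $h$'' precise. It must mean that $\textsc{id}(p)$ lies in the extension each correct validator attaches to its committing \texttt{PRECOMMIT} at $h$, rather than in some earlier-round extension only. I would first try to pin down that reading: $\textsc{id}(p) \in pending$ at every correct validator when it precommits, and by Lemma~\ref{lem:monotonicity} it is not filtered out. If that reading is adopted, the counting step alone gives at least $f{+}1 > f$ attestations, which completes the proof. The exceptional case would be set aside as the scenario in which $p$ is already decided and so needs no bounded-inclusion argument.
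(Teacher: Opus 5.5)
This is essentially the paper's proof: Lemma~\ref{lem:monotonicity} carries each correct validator's attestation into the round whose \texttt{PRECOMMIT}s form $c$. Since $c$ has at least $2f{+}1$ signers, at least $f{+}1$ of them are correct and contribute $\textsc{id}(p)$, giving more than $f$ extensions.

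Your caveats are, if anything, more accurate than the paper's text. The paper dismisses the exceptional case by asserting that $p$ is then present in $c'$ ``by definition'', even though the proposed value there is a certificate for $h{-}1$ and correct validators drop $\textsc{id}(p)$ from their extensions in that round; so $p$ is generally absent from $c$, and, as you say, one only gets that $p$ is ordered at $h$. The paper also tacitly relies on the same strengthened reading of the hypothesis that you adopt. Lemma~\ref{lem:monotonicity} only propagates attestations forward, so it says nothing about a correct validator whose sole attestation comes in a round after the one that produced $c$. That later-round case, rather than the earlier-round one you mention, is the main thing your strengthened reading actually rules out.
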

\begin{proof}
Assume for the sake of contradiction that $p$ is attested to by all correct
validators at height $h$ but not present in some certificate $c'$ for
height $h$.
By Lemma~\ref{lem:monotonicity}, every correct validator that attests to $p$ in any round of $h$ also attests to $p$ in the deciding round---unless $\textsc{id}(p)$ is already present in the proposed value, in which case $p$ is present in $c'$ by definition.
So assume $\textsc{id}(p)$ is not in the proposed value of the deciding round.
Then all correct validators include $\textsc{id}(p)$ in the deciding round's vote extensions, yet $p$ is not present in $c'$.
This means that no more than $f$ validators included $\textsc{id}(p)$ in their vote extensions in $c'$.
However, any certificate must contain valid \texttt{PRECOMMIT} messages from at least $2f{+}1$ validators.
Since at most $f$ validators are Byzantine, $c'$ must contain \texttt{PRECOMMIT} messages from more than $f$ correct validators.
Tendermint requires $n > 3f$, so $n - 2f > f$.
$p$ is attested to by all correct validators at $h$ and more than $f$ correct validators are in $c'$.
This contradicts our assumption.

\end{proof}

\begin{corollary}\label{cor:unforge}
If payload $p$ is attested to by all correct validators at height $h$, then it is impossible to create a certificate $c$ for $h$ such that $p$ is not present in $c$.
\end{corollary}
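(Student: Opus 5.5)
This corollary follows from Lemma~\ref{lem:present} together with the unforgeability of vote extensions; the plan is to argue by contradiction. Suppose $p$ is attested to by all correct validators at height $h$, yet some party (possibly a coalition of up to $f$ Byzantine validators, or a malicious block assembler) produces a \emph{commit} certificate $c$ for $h$ in which $p$ is not present. The first step is to note that any such $c$ counts as a certificate in the sense of the Definition of \emph{commit} certificate only if it passes \textsc{validCommit()}. That is, it consists of valid, uniquely-sent \texttt{PRECOMMIT}s from at least $2f{+}1$ distinct validators, each with a verified signature covering its vote extension. Lemma~\ref{lem:present} quantifies over \emph{any} certificate for $h$, so once $c$ is recognized as a certificate, it directly yields that $p$ is present in $c$, which is the contradiction.

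The substance, then, is to justify that a ``created'' certificate genuinely falls under Lemma~\ref{lem:present}. In particular, the adversary cannot fabricate or alter the vote extensions of correct validators. I would invoke the transferable-authentication/PKI assumption of Section~\ref{sec:model} and the tamper-evidence of extensions from Section~\ref{sec:extensions}. Every \texttt{PRECOMMIT} in $c$ attributed to a correct validator must therefore be one that validator actually emitted, carrying exactly the extension it produced. Next I count. At most $f$ entries of $c$ come from Byzantine validators, so at least $f{+}1$ come from correct validators. By Lemma~\ref{lem:monotonicity}, each such correct validator's extension in the relevant round contains $\textsc{id}(p)$, unless $\textsc{id}(p)\in\textsc{soundIDs}(v)$ for the precommitted value $v$. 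In that case, the Lemma~\ref{lem:present} argument already handles it, since $p$ is then already being ordered via $v$.

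The main obstacle is exactly this last subtlety. It is the case where correct validators legitimately omit $\textsc{id}(p)$ because it is already sound in the proposed value, together with the question of which round's precommits the adversary assembles $c$ from. I would handle it the way Lemma~\ref{lem:present} does. All \texttt{PRECOMMIT}s in a valid $c$ are identical apart from sender, signature and extension, so they share the same $(h,r,v)$. Hence either $\textsc{id}(p)\in\textsc{soundIDs}(v)$, which is the excluded case treated as present by the lemma's convention, or every correct signer's extension for that round contains $\textsc{id}(p)$. The second alternative gives more than $f$ attestations in $c$, so $p$ is present, contradicting the assumption and proving the corollary.
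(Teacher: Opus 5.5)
Your first paragraph is the paper's proof. The paper's entire argument is that the corollary follows directly from Lemma~\ref{lem:present}: any ``created'' $c$ that passes \textsc{validCommit()} is a certificate for $h$, and the lemma covers every such certificate.

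The re-derivation of Lemma~\ref{lem:present} that follows is unnecessary, and its key ``hence'' does not hold. That all \texttt{PRECOMMIT}s in $c$ share one round $r$ does not force each correct signer's round-$r$ extension to contain $\textsc{id}(p)$: attesting at height $h$ allows a correct validator to first include $\textsc{id}(p)$ in a round after $r$, while Lemma~\ref{lem:monotonicity} only propagates attestations forward. Separately, when $\textsc{id}(p)\in\textsc{soundIDs}(v)$, Definition~\ref{def:present} does not make $p$ present in $c$. Both points are glosses already in the paper's proof of Lemma~\ref{lem:present}, which your extra paragraphs inherit rather than repair.
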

\begin{proof}
Follows directly from Lemma~\ref{lem:present}.
\end{proof}

\begin{lemma}\label{lem:decision}
If payload $p$ is attested to by all correct validators at height $h$, then $p$ is present in the certificate decided in $h+1$.
\end{lemma}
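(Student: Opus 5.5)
The plan is to chain Lemma~\ref{lem:present} (equivalently Corollary~\ref{cor:unforge}) with the proposal rule of \protoname and the validity check performed on proposals. First I would fix the decision of height $h$. By Termination of the agreement layer, every correct validator eventually executes \textsc{Decided} for $h$. It then stores in $attestations$ a \emph{commit} certificate $c_h$ for height $h$ (line~\ref{algo:attestations}). By Lemma~\ref{lem:present}, $p$ is present in $c_h$, and more generally in \emph{any} certificate for $h$, whoever assembled it. This matters because different validators, including Byzantine block assemblers, may hold different commit certificates for $h$, with different subsets of \texttt{PRECOMMIT}s and extensions.

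Next I would pin down what gets decided at height $h{+}1$. By line~\ref{algo:getvalue}, a correct block assembler at $h{+}1$ proposes its certificate for $h$. By line~\ref{algo:proposalend}, a correct validator prevotes for a proposal only if \textsc{validCommit} accepts it. By the validity property of the agreement layer, the value $v$ decided at $h{+}1$ therefore satisfies \textsc{validCommit}. I would argue that a value passing \textsc{validCommit} for height $h$ is a commit certificate for $h$ in the sense of the Definition. It contains signed, identical \texttt{PRECOMMIT}s from at least $2f{+}1$ distinct validators, with valid, tamper-evident extensions. This is why even a Byzantine assembler's proposal, once decided, is a genuine certificate for $h$. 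Applying Corollary~\ref{cor:unforge}, $p$ is present in $v$: $\textsc{id}(p)$ appears in the extensions of more than $f$ validators in $v$. That is exactly the claim, since ``the certificate decided in $h{+}1$'' is the decided value $v$.

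The main obstacle is the middle step: justifying that whatever value is decided at $h{+}1$ really is a certificate for height $h$, rather than something a Byzantine assembler forged or replayed from another height. I would handle this by making explicit that \textsc{validCommit} checks the height and the signatures, so the extensions cannot be altered. It also checks that the \texttt{PRECOMMIT}s carry $2f{+}1$ distinct senders and agree on height $h$ and on one value, and per transferable authentication these cannot be fabricated for correct senders. A secondary subtlety is the round in which $h$ was decided versus the rounds in which correct validators attested. A Byzantine assembler could gather \texttt{PRECOMMIT}s from a different round of $h$ that also reached a quorum. Lemma~\ref{lem:present} as stated covers any certificate for $h$, and Lemma~\ref{lem:monotonicity} carries the attestation forward across rounds, so I would lean on the ``any certificate'' phrasing there rather than reprove it.
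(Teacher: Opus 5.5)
Your proposal is correct and follows essentially the paper's argument: Lemma~\ref{lem:present} and Corollary~\ref{cor:unforge} guarantee that $p$ is present in every certificate for $h$, and since correct validators accept a proposal only if it passes \textsc{validCommit} (line~\ref{algo:proposalend}), whatever is decided at $h{+}1$ is such a certificate. The only difference is that you invoke the agreement layer's Validity property directly, whereas the paper splits into correct and Byzantine block assemblers and argues that invalid proposals are rejected; the paper also invokes Tendermint termination, which you should cite too to guarantee that a certificate is in fact decided at $h{+}1$.
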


\begin{proof}
Let us consider the execution flow during $h+1$.
From lines~\ref{algo:getvalue}--\ref{algo:attestations}, we see that any correct block assembler submits a proposal consisting of a certificate for the previous height.
Since $p$ is attested to by all correct validators at $h$ then, by Lemma~\ref{lem:present}, $p$ is present in any certificate for $h$.
Therefore, any correct block assembler submits a proposal in $h+1$ where $p$ is present.

Byzantine block assemblers may attempt to propose a certificate in $h+1$ where $p$ is not present.
However, by Corollary~\ref{cor:unforge} this is impossible without invalidating the certificate.
Invalid proposals are rejected by correct validators (line~\ref{algo:proposalend}), causing the round to fail, and a new round to begin.
By Tendermint's termination property, a value is eventually decided.

\end{proof}

\begin{lemma}\label{lem:retransmission}
Correct validators eventually receive all \payloads present in a certificate.
\end{lemma}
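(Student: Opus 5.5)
The argument combines three ingredients: the threshold in Definition~\ref{def:present}, the behaviour of correct validators when they build vote extensions (line~\ref{algo:ves}), and the model's assumptions on communication among correct nodes after GST.

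\emph{Step 1: a present \payload is held by a correct validator.} Let $p$ be present in a \emph{commit} certificate $c$. By Definition~\ref{def:present}, $\textsc{id}(p)$ appears in the vote extensions of more than $f$ distinct validators in $c$. A valid certificate holds at most one \texttt{PRECOMMIT} per validator, and each extension is signed by its sender. Hence extensions cannot be forged, and at most $f$ of these attesters are Byzantine. So at least one correct validator $v$ put $\textsc{id}(p)$ into an extension.

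\emph{Step 2: that validator stores $p$ permanently.} A correct validator's extension is a subset of $\V{pending}$ (line~\ref{algo:vesend}). An $id$ enters $\V{pending}$ only at line~\ref{algo:deliverend}, and at the same moment $\V{payloads}[id]$ is set to the full \payload after \textsc{validPayload} succeeds. Removal from $\V{pending}$ at \textsc{Decided} does not delete anything from $\V{payloads}$. Therefore $v$ holds $p$ from the moment it attests onward. Here I need the paper's assumption that \textsc{id} is collision resistant, so that $\V{payloads}[\textsc{id}(p)]$ really is $p$.

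\emph{Step 3: retransmission reaches everyone.} Any correct validator $u$ missing $p$ after a decision that references it issues a retransmission request to the other validators, as described under ``Payload Availability''. Correct validators answer such requests from their $\V{payloads}$ store. By the network model, Byzantine nodes cannot block communication between correct nodes, and after GST messages arrive within $\Delta$. So $u$'s request eventually reaches $v$, and $v$'s reply eventually reaches $u$, who obtains $p$. Validators keep re-requesting until they receive an answer, so losses before GST do not matter. Finally, $u$ can check the reply against $\textsc{id}(p)$, so Byzantine responders cannot substitute a different \payload. Having received $p$, $u$ stores it, which is the lemma's claim.

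\emph{Main obstacle.} The retransmission logic is not in the pseudocode. The proof therefore has to state explicitly its assumed behaviour: requests are repeated until answered, and correct validators serve any stored \payload. It must then argue liveness of this request/response exchange from partial synchrony and the reliable correct-to-correct relaying assumption. I would make these assumptions explicit and keep the rest of the argument purely combinatorial, as in Steps 1 and 2.
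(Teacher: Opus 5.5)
Your proof is correct and follows the same route as the paper. Certificate presence means more than $f$ attesters, so at least one attester is correct; its $\V{pending}$ entry can only exist after it delivered and stored the payload, so that validator can answer retransmission requests. The points you make explicit (that $\V{payloads}$ entries are never deleted, that \textsc{id} binds content so replies can be checked, and the liveness assumptions for the unmodelled request/response exchange under partial synchrony) fill in details the paper leaves implicit rather than changing the argument.
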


\begin{proof}
By definition, \payloads present in a certificate are attested to by more than $f$ validators.
Since at most $f$ validators are Byzantine, at least one correct validator attests to each \payload.
Thus, at least one correct validator has the \payload's identifier in its $pending$ set (lines~\ref{algo:ves}--\ref{algo:vesend}) which only happens if the validator has received the \payload (lines~\ref{algo:deliver}--\ref{algo:deliverend}).
Correct validators can therefore ask other validators for the retransmission of any payloads present in the certificate that they do not know about.
At least one correct validator has the missing payload, so this will succeed.
\end{proof}

\section{Analysis}
\label{sec:analysis}

We analyze the complexity of the \protoname protocol under the following assumptions:

\begin{itemize}
 \item $n$ is the total number of validators;
 \item Every validator is also a \collector;
 \item $m$ is the maximum size of a \payload;
 \item \Payload identifiers byte size is $O(1)$;
 \item Each \collector broadcasts one payload per consensus height ($n$ payloads total);
 \item Communication is via gossip, with latency $O(\log n)$,
 and message complexity $O(n \cdot \log n)$.
\end{itemize}

\paragraph{Latency.}
\Collectors broadcast \payloads to validators in $O(\log n)$ message delays.
Validators then exchange the \payload identifiers via vote extensions in $O(\log n)$ message delays.
Consensus requires $3$ communication steps, each one in $O(\log n)$ message delays.
The total good-case latency is 5 communication rounds or $5 \times O(\log n)$ message delays.

\paragraph{Message Complexity.}
The gossip of a message incurs message complexity $O(n \cdot \log n)$.
\Collectors send $n$ payloads per height, yielding a $O(n^2 \cdot \log n)$ complexity for the dissemination layer.
Next, each of the $n$ validators broadcasts one vote extension, with the same $O(n^2 \cdot \log n)$ complexity.
The three communication steps of consensus are $O(n \cdot \log n)$ (\texttt{PROPOSE})
and $O(n^2 \cdot \log n)$ (\texttt{PREVOTE} and \texttt{PRECOMMIT}).
The message complexity is therefore $O(n^2 \cdot \log n)$.

\paragraph{Byte Complexity.}
Gossiping a payload of size $m$ takes $O(n \cdot \log n \cdot m)$ bytes.
As there are $n$ payloads per height, the cost is $O(n^2 \cdot \log n \cdot m)$.
Vote extensions contain up to $n$ identifiers,
so \texttt{PRECOMMIT} messages are $O(n)$ bytes, including vote extensions.
All validators broadcast \texttt{PRECOMMIT} messages,
which leads to a $O(n^3 \cdot \log n)$ byte complexity.
The \texttt{PROPOSE} message carries a full \emph{commit} certificate
of $O(n)$ such \texttt{PRECOMMIT}s, totaling $O(n^2)$ bytes and $O(n^3 \cdot \log n)$ bytes when gossiped.
\texttt{PREVOTE} messages are $O(1)$ bytes each, contributing $O(n^2 \cdot \log n)$ bytes.
Since the relation between the number of validators $n$ and \payload size $m$ is unknown,
the overall communication complexity is $O((m+n) \cdot n^2 \cdot \log n)$.

\end{document}